
\documentclass{tCON2e}

\usepackage{epstopdf}
\usepackage{subfigure}

\usepackage[longnamesfirst,sort]{natbib}
\bibpunct[, ]{(}{)}{;}{a}{,}{,}


\theoremstyle{plain}
\newtheorem{theorem}{Theorem}

\newtheorem{corollary}{Corollary}

\newtheorem{proposition}{Proposition}

\theoremstyle{definition}

\theoremstyle{remark}
\newtheorem{remark}{Remark}

\usepackage{tikz}
\usetikzlibrary{shapes,arrows,shadows,decorations.pathmorphing,calc}

\tikzstyle{block} = [draw, rectangle,  thick, fill=white, minimum width=13mm, minimum height=5mm]
\tikzstyle{sum} = [draw, circle,inner sep=0mm, minimum size=1.5mm]
\tikzstyle{connector} = [->, thick, >=latex]

\renewcommand{\vec}[1]{\bm{#1}}
\newcommand{\rank}[1]{\textnormal{rank}\left(#1\right)}
\renewcommand{\dim}[1]{\textnormal{dim}\left(#1\right)}
\renewcommand{\det}[1]{\textnormal{det}\left(#1\right)}

\newcommand{\RZ}{\mathbb{R}}

\def \zero {\vec{0}}

\def \x {\vec{x}}
\def \y {\vec{y}}
\def \u {\vec{u}}
\def \v {\vec{v}}
\def \z {\vec{z}}
\def \w {\vec{w}}
\def \d {\vec{d}}

\def \e {\vec{e}}
\def \f {\vec{f}}

\def \t {\vec{t}}
\def \o {\vec{\omega}}
\def \vpi {\vec{\pi}}

\def \A {\vec{A}}
\def \B {\vec{B}}
\def \C {\vec{C}}
\def \D {\vec{D}}

\def \F {\vec{F}}

\def \I {\vec{I}}

\def \K {\vec{K}}

\def \M {\vec{M}}
\def \N {\vec{N}}
\def \P {\vec{P}}
\def \Q {\vec{Q}}
\def \R {\vec{R}}
\def \S {\vec{S}}
\def \T {\vec{T}}

\def \V {\vec{V}}

\def \Z {\vec{Z}}

\def \mPi {\vec{\Pi}}
\def \mGamma {\vec{\Gamma}}
\def \mLambda {\vec{\Lambda}}
\def \mXi {\vec{\Xi}}
\def \mPsi {\vec{\Psi}}
\def \mPhi {\vec{\Phi}}
\def \mTheta {\vec{\Theta}}

\newcommand{\tp}{{\textnormal{\textsf{T}}}}


\begin{document}



\title{From output regulation theory to flatness based tracking: \break a bridge for linear systems}


\author{
\name{Saman Khodaverdian}
\affil{\textsuperscript{}Institute of Automatic Control and Mechatronics, Control Methods and Robotics Lab, Technische Universit\"at Darmstadt, Landgraf-Georg-Str. 4, 64283 Darmstadt, Germany.\\ 
Email: saman.khodaverdian@rmr.tu-darmstadt.de}
}

\maketitle  

\begin{abstract}
The trajectory tracking problem for multivariable linear systems is considered. Two different techniques are examined: the \emph{output regulation theory} (ORT) and the \emph{flatness based tracking} (FBT). ORT and FBT are two different approaches to solve the tracking problem, and both methods have different restrictions. The tracking controller of the ORT furthermore depends on the solution of the so-called \emph{regulator equations}. In this paper, a special analytic solution of the regulator equations is presented. Additionally, based on this analytic solution, a link from the ORT to the FBT approach is provided, and the connection of both tracking controllers is highlighted. It is shown how the ORT controller can be converted to the FBT controller and that both methods lead to identical control laws for a certain class of systems. 
\end{abstract}

\begin{keywords}
trajectory tracking; output regulation theory; regulator equations; flat outputs
\end{keywords}

\section{Introduction} \label{sec:introduction}

This paper considers the tracking problem for multivariable linear systems. Tracking means that the system is controlled in a way such that its output follows a given reference trajectory. There is a considerable amount of literature dealing with this problem, where the tracking problem is usually solved by means of system inverses \citep{BrockettMesarovic1965, SainMassey1969, Silvermann1969, Hirschorn1979}. Based on these results, there have been developed different methods to solve the tracking problem. 

A proven way is the \emph{flatness based tracking} (FBT) approach \citep{Fliess1995, Fliess1999}, where a set of (virtual) outputs is searched such that the system states and inputs can be completely characterized by this so-called \emph{flat output} and its time derivatives. The tracking problem can easily be solved for such flat outputs. However, the conversion to the real system output might be problematic since for non-minimum phase systems the relation is determined by an unstable differential equation.\footnote
{ 
It is possible to overcome this problem, for instance, with approximate tracking techniques \citep{Hauser1992, Benvenuti1994} where non-minimum phase systems are approximated by minimum phase systems. However, this results in inexact tracking due to neglected terms. Another technique leading to exact tracking is the \emph{stable inversion theory} \citep{DevasiaChenPaden1996, ChenPaden1996}, but the solution needs non-causal and possibly unbounded control inputs \citep{Tomlin1995}. 
}
Nevertheless, the FBT is a common and widespread technique in control literature (see for instance \citep{Rothfuss1996, MartinDevasiaPaden1996, Fliess2000, Hagenmeyer2003, Hagenmeyer2004} and the references therein). 

Another approach to solve the tracking problem is provided by the \emph{output regulation theory} (ORT) which has been developed in \citep{FrancisWonham1976, Francis1977} for linear systems, and was extended to the nonlinear case in \citep{IsidoriByrnes1990}.\footnote{An alternative name for the ORT is the \emph{robust servomechanism problem} \citep{Davison1976}.} The ORT achieves stable tracking under mild assumptions (especially in the linear case), and bypasses the stability problems of non-minimum phase systems. The solution depends on the solvability of a set of equations, the \emph{regulator equations}, which is usually solved numerically. The drawback of the ORT is that the reference trajectory must be given by a so-called \emph{exosystem}, while in the FBT approach the reference trajectory can be any function that is sufficiently often differentiable. Thus, the ORT method is restricted in the treatable class of reference trajectories. Like the FBT, the ORT is a common technique for the design of 
tracking controllers (the textbooks \citep{SaberiStoorvogelSannuti2000} and \citep{Huang2004} are recommendable for a detailed overview). 

FBT and ORT are different approaches to solves the tracking problem, but in this paper it is shown that under certain conditions both methods lead to identical control laws. First, a special analytic solution of the regulator equations is presented which allows to interpret the tracking controller of the ORT in a new light. Based on this analytic solution, it is shown that ORT and FBT result in identical tracking controllers if the system output is flat, which also implies that in case of flat system outputs no exosystem is needed. Additionally, it is shown that in case of non-flat outputs, the ORT tracking controller can be converted to an FBT-like controller for a special choice of the free parameters. As expected, this controller parametrization leads to an unstable closed-loop behavior if the system is non-minimum phase. Note that the ORT approach has the ability to achieve stable tracking even in case of non-minimum phase systems. But then, the reference trajectory must be restricted to the class of exosystem 
trajectories, which can also be seen from the derived solution of the regulator equations.

In the following, some notations are summarized, before the problem setup and basics are stated in Section~\ref{sec:preliminaries}. The main result is presented in Section~\ref{sec:main}, where the single-input single-output (SISO) case is considered first (Section~\ref{sec:SISO}), before the general multiple-input multiple-output (MIMO) case is examined (Section~\ref{sec:MIMO}). Finally, the results are summarized in Section~\ref{sec:conclusion}.

{\bf Notation.}
Boldface letters denote vectors and matrices and italic letters represent scalar values. The identity matrix of dimension $n$ is written as $\I_n$, while $\zero$ denotes a zero matrix or vector of appropriate size. For a linear time-invariant system $\dot \x = \A \x + \B \u$, $\y = \C \x$, the invariant zeros are all numbers $\lambda$ for which the matrix
\begin{equation*}
 \R(\lambda) = \begin{bmatrix} \lambda\I_n-\A & -\B \\ \C & \zero \end{bmatrix} 
\end{equation*}
has a rank deficit \citep{MacFarlane1976}. $\R(\lambda)$ is known as the \emph{Rosenbrock's} system matrix. The system is called minimum phase if the real part of all invariant zeros is negative, otherwise it is called non-minimum phase.
The relative degree of the $k$-th output states how often $y_k$ must be differentiated with respect to time until the input $\u$ appears, i.e. $\delta_k = \min\limits_{\nu}\{\nu \ge 1: \C_k \A^{\nu-1} \B \ne \zero\}$ where $\C_k$ is the $k$-th row vector of $\C$. The relative degree of the system is then given by $\delta = \sum_{k=1}^m \delta_k$.

\section{Problem statement and preliminaries} \label{sec:preliminaries}

The trajectory tracking problem for multivariable linear systems is considered (as depicted in Figure~\ref{fig:setup}) where the plant is described by
\begin{subequations} \label{eq:plant}
 \begin{align}
  \dot\x &= \A\x + \B\u \\
  \y &= \C\x ,
 \end{align}
\end{subequations}
with state, input and output vector $\x \in \RZ^{n}$, $\u \in \RZ^{m}$ and $\y \in \RZ^{m}$, respectively. It is assumed that the plant is fully controllable and $\rank{\B} = \rank{\C} = m$. To avoid vastness, square systems are considered ($\dim{\u} = \dim{\y}$), but the results hold also for systems with more inputs than outputs. The goal is to find a feedback controller such that for any initial condition the output of the plant tracks a desired reference trajectory $\y^d$, i.e.
\begin{equation} \label{eq:goal}
 \lim\limits_{t \to \infty} (\y(t)-\y^d(t)) = \zero.
\end{equation}
The desired output vector $\y^d$ can be given by a trajectory generator as illustrated in Figure~\ref{fig:setup}. Two different approaches for solving the tracking problem \eqref{eq:goal} are summarized subsequently: the \emph{output regulation theory} (ORT) and the \emph{flatness based tracking} (FBT). 

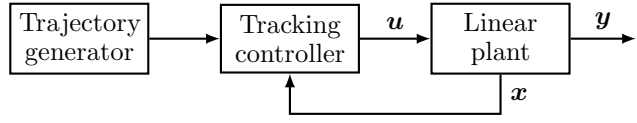
\begin{figure}[t]
 \centering
 \begin{tikzpicture}[scale=.25, auto]  
  \small
  
  \node[block, minimum height = 9mm, minimum width = 18mm, text width=16mm, align=center] (TG) {\linespread{1}\selectfont Trajectory generator\par};
  \node[block, right of=TG, node distance=28mm, minimum height = 9.0mm, minimum width = 18mm, text width=16mm, align=center] (C) {\linespread{1}\selectfont Tracking controller\par};
  \node[block, right of=C, node distance=28mm, minimum height = 9mm, minimum width = 18mm, text width=16mm, align=center] (P) {\linespread{1}\selectfont Linear plant\par};
  \node[coordinate, below of=P, node distance=10mm] (helpPoint) {};
  \node[coordinate, right of=P, node distance=18mm] (endPoint) {};

  \draw[connector] (TG) -- (C);
  \draw[connector] (C) -- node {$\u$} (P);
  \draw[connector] (P) -- node {$\x$} (helpPoint) -| (C);
  \draw[connector] (P) -- node {$\y$} (endPoint);
  
 \end{tikzpicture}
 \caption{Basic setup for trajectory tracking.}
 \label{fig:setup}
\end{figure}

\subsection{Review of FBT} \label{sec:FBT}

A system is called flat if there exists a virtual output $\z$ such that all input and state variables can be expressed as functions of this output vector and its time derivatives, i.e. $\x = \f_x(\z, \dot{\z}, \ddot{\z}, \ldots)$ and $\u = \f_u(\z, \dot{\z}, \ddot{\z}, \ldots)$ \citep{Fliess1995, Fliess1999}. The virtual output $\z$ is called a flat output and does not necessarily need to coincide with the real output $\y$. 
Some basics regarding the FBT of linear systems are summarized in the following. More details can be found in the mentioned references.  

\begin{proposition}
 A linear system \eqref{eq:plant} is flat if and only if it is controllable \citep{Fliess1995}.
\end{proposition}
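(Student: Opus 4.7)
The plan is to prove the two implications separately, exploiting the fact that for a linear plant the flatness maps $\f_x$ and $\f_u$ are necessarily linear in their arguments.

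For \emph{flat $\Rightarrow$ controllable}, my approach is an open-loop construction. Let $\z$ be a flat output, so that there is some integer $r \ge 0$ with $\x = \f_x(\z, \dot\z, \ldots, \z^{(r)})$ and $\u = \f_u(\z, \dot\z, \ldots, \z^{(r+1)})$. Given arbitrary states $\x_0, \x_1$ and times $t_0 < t_1$, I would first pick jets $(\z_0, \dot\z_0, \ldots, \z_0^{(r)})$ and $(\z_1, \dot\z_1, \ldots, \z_1^{(r)})$ whose images under $\f_x$ are $\x_0$ and $\x_1$ respectively; this is possible because the parametrization covers the entire state space. Then I would interpolate these boundary jets by a sufficiently smooth curve $\z(t)$, for instance a polynomial of degree $2r+1$, and apply the feedforward input $\u(t) = \f_u(\z(t), \ldots, \z^{(r+1)}(t))$. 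By construction this drives the plant from $\x_0$ to $\x_1$ in finite time, so the reachable set is all of $\RZ^n$ and controllability follows.

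For \emph{controllable $\Rightarrow$ flat}, the plan is to invoke the Brunovsky canonical form. Since $(\A, \B)$ is controllable with $\rank{\B} = m$, there exist an invertible state transformation $\T$, an invertible input transformation $\M$, and a state feedback $\K$ that bring the pair into $m$ decoupled integrator chains of lengths $\kappa_1, \ldots, \kappa_m$ (the controllability indices), with $\kappa_1 + \cdots + \kappa_m = n$. Writing $z_i$ for the top of the $i$th chain, the remaining state components in that chain are exactly $z_i^{(j-1)}$ for $j = 2, \ldots, \kappa_i$, and the transformed input satisfies $\tilde u_i = z_i^{(\kappa_i)}$. Undoing the linear transformations $\x = \T \tilde\x$ and $\u = \M \tilde\u - \K\x$ then expresses the original state and input as linear functions of $\z = (z_1, \ldots, z_m)$ and its derivatives up to order $\max_i \kappa_i$, so $\z$ is a flat output.

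The chief technical obstacle is obtaining the Brunovsky decomposition itself, which relies on the classical selection of a basis from the Kalman controllability matrix and the associated analysis of the controllability indices. Once that canonical form is in hand, identification of the flat output is immediate by reading off the chain tops, and the forward implication reduces to a Hermite-type interpolation problem for chains of integrators.
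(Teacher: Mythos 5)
Your proposal is correct. Note that the paper does not actually prove this proposition -- it cites Fliess et al.\ (1995) -- but it does sketch the direction \emph{controllable $\Rightarrow$ flat} immediately afterwards, and there your argument is essentially the same as the paper's: the paper passes to the Luenberger controllable canonical form \eqref{eq:ccf} (no feedback or input transformation needed, since the off-diagonal coupling and the input only enter in the last row of each block), reads off the chain tops \eqref{eq:z} as the flat output, and obtains \eqref{eq:xz}--\eqref{eq:uz}; you instead go all the way to the Brunovsky form with an additional feedback and input transformation, which the paper only introduces later in \eqref{eq:A'B'}--\eqref{eq:bcf}. Both yield the same flat output, so this is a cosmetic difference. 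Your treatment of the converse, \emph{flat $\Rightarrow$ controllable}, is a genuine addition the paper leaves entirely to the reference. The open-loop steering argument is sound, but two of its steps deserve to be made explicit: (i) that $\f_x$ is surjective onto $\RZ^n$ follows because every system trajectory, hence every initial state, is covered by the parametrization; and (ii) that feeding $\u(t)=\f_u(\z(t),\ldots)$ into the plant actually reproduces $\x(t)=\f_x(\z(t),\ldots)$ uses the defining property of flatness that arbitrary smooth curves $\z(\cdot)$ correspond bijectively to system trajectories, not merely that trajectories admit such a representation. For the linear case a shorter converse is also available: an uncontrollable pair admits, by the Kalman decomposition, a state coordinate with autonomous dynamics, which cannot be freely assigned by any differential parametrization.
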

Thus, for controllable linear systems there exists always a flat output which (in contrast to the nonlinear case) can be found very easily: since \eqref{eq:plant} is assumed to be controllable, there exists a linear transformation $\widetilde{\x} = \widetilde{\T}\x$ which brings the system into controllable canonical form \citep{Luenberger1967}, meaning that the transformed system matrices $(\widetilde{\A}, \widetilde{\B}, \widetilde{\C})=(\widetilde{\T}\A\widetilde{\T}^{-1}, \widetilde{\T}\B, \C\widetilde{\T}^{-1})$ are given by  
\begin{subequations} \label{eq:ccf}
\allowdisplaybreaks
\begin{alignat}{2}
 \widetilde{\A} &= \begin{bmatrix} \widetilde{\A}_{11} & \widetilde{\A}_{12} & \cdots & \widetilde{\A}_{1m} \\ \widetilde{\A}_{21} & \widetilde{\A}_{22} & \cdots & \widetilde{\A}_{2m} \\ \vdots & \vdots & \ddots & \vdots \\ \widetilde{\A}_{m1} & \widetilde{\A}_{m2} & \cdots & \widetilde{\A}_{mm} \end{bmatrix},&
\quad
 \widetilde{\B} = &\begin{bmatrix} \widetilde{\B}_{11} & \widetilde{\B}_{12} & \cdots & \widetilde{\B}_{1m} \\ \zero & \widetilde{\B}_{22} & \cdots & \widetilde{\B}_{2m} \\ \vdots & \vdots & \ddots & \vdots \\ \zero & \zero & \cdots & \widetilde{\B}_{mm} \end{bmatrix}, \\
 \widetilde{\A}_{ii} &= \begin{bmatrix} 0 & 1 &  \cdots & 0 \\ \vdots & \vdots & \ddots & \vdots \\ 0 & 0 & \cdots & 1 \\ \star & \star & \cdots & \star \end{bmatrix} \in \RZ^{\kappa_i \times \kappa_i},& 
\quad
 \widetilde{\A}_{\substack{ ij \\ i \ne j }} &= \begin{bmatrix} 0 & 0 &  \cdots & 0 \\ \vdots & \vdots & \ddots & \vdots \\ 0 & 0 & \cdots & 0 \\ \star & \star & \cdots & \star \end{bmatrix} \in \RZ^{\kappa_i \times \kappa_j}, \\ 
 \widetilde{\B}_{ii} &= \begin{bmatrix} 0 & \cdots & 0 & 1 \end{bmatrix}^\tp \in \RZ^{1 \times \kappa_i},& 
\quad
 \widetilde{\B}_{\substack{ ij \\ j>i }} &= \begin{bmatrix} 0 & \cdots & 0 & \star \end{bmatrix}^\tp \in \RZ^{1 \times \kappa_i}.
\end{alignat}
\end{subequations}
Herein, the $\kappa_i$'s are the controllablity (or \emph{Kronecker}) indices, with $\sum_{i=1}^m \kappa_i = n$, and $\star$ denotes elements that are not necessarily zero. The transformed output matrix $\widetilde{\C}$ does not have any specified structure. Obviously, a flat output is \begin{equation} \label{eq:z}
 \z = \begin{bmatrix} z_1 & z_2 & \cdots & z_m \end{bmatrix}^\tp = \begin{bmatrix} \widetilde{x}_1 & \widetilde{x}_{\kappa_1+1} & \cdots & \widetilde{x}_{n-\kappa_{m}+1} \end{bmatrix}^\tp
\end{equation}
since 
\begin{subequations} \label{eq:xz}
\begin{align}
 \begin{bmatrix} \widetilde{x}_1 & \widetilde{x}_2 & \cdots & \widetilde{x}_{\kappa_1} \end{bmatrix}^\tp &= \begin{bmatrix} z_1 & \dot z_1 & \cdots & z_1^{(\kappa_1-1)} \end{bmatrix}^\tp \\
 \begin{bmatrix} \widetilde{x}_{\kappa_1+1} & \widetilde{x}_{\kappa_1+2} & \cdots & \widetilde{x}_{\kappa_1+\kappa_2} \end{bmatrix}^\tp &= \begin{bmatrix} z_2 & \dot z_2 & \cdots & z_2^{(\kappa_2-1)} \end{bmatrix}^\tp \\
& \hspace{2mm} \vdots \notag \\ 
 \begin{bmatrix} \widetilde{x}_{n-\kappa_m+1} & \widetilde{x}_{n-\kappa_m+2} & \cdots & \widetilde{x}_n \end{bmatrix}^\tp &= \begin{bmatrix} z_m & \dot z_m & \cdots & z_m^{(\kappa_m-1)} \end{bmatrix}^\tp ,
\end{align}
\end{subequations}
with $z_i^{(k)} = {\text{d}^k z_i}/{\text{d}t^k}$. By a further derivation, and with the relations \eqref{eq:xz}, it can simply be seen that 
\begin{equation} \label{eq:uz}
 \u = \f_u(z_1, \dot{z}_1, \ldots, z_1^{(\kappa_1)}, \ldots, z_m, \dot{z}_m, \ldots, z_m^{(\kappa_m)}) . 
\end{equation}

Note that for the flat output $\z$, we have $\kappa_i = \delta_i$ which follows from \eqref{eq:ccf} and \eqref{eq:z}. Thus, $\delta = \sum_{i=1}^m \delta_i = \sum_{i=1}^m \kappa_i = n$ which implies that with the flat output there is no zero dynamics (since $\delta=n$). Given a sufficiently often differentiable reference trajectory $\z^d$, a tracking controller for the flat output can be designed such that $\lim\limits_{t \to \infty}(\z(t)-\z^d(t))=\zero$. Considering the error differential equations 
\begin{subequations} \label{eq:ez}
\begin{align} 
 (z_1^{(\kappa_1)} - z_1^{d \, (\kappa_1)}) + \sum_{k=0}^{\kappa_1-1} p_{1k} (z_1^{(k)} - z_1^{d \, (k)}) &= 0 \\ 
  & \hspace{2mm} \vdots \notag \\ 
 (z_m^{(\kappa_m)} - z_m^{d \, (\kappa_m)}) + \sum_{k=0}^{\kappa_m-1} p_{mk} (z_m^{(k)} - z_m^{d \, (k)}) &= 0
\end{align} 
\end{subequations}
and the relations \eqref{eq:xz} and \eqref{eq:uz}, a state feedback controller of the form 
\begin{equation} \label{eq:uFBT}
 \u = -\K \widetilde{\x} + \F \z^d_\Delta, 
\end{equation}
with $\z^d_\Delta = \begin{bmatrix} z^d_1 & \dot{z}^d_1 & \cdots & z_1^{d \, (\kappa_1)} & \cdots & z^d_m & \dot{z}^d_m & \cdots & z_m^{d \, (\kappa_m)} \end{bmatrix}^\tp$, can be found which regulates the flat output $\z$ to its desired trajectory $\z^d$. Herein, the controller parameters $p_{ik}$ must be chosen such that the dynamics \eqref{eq:ez} is stable.

\begin{remark}
 Flat outputs are not unique since every virtual output for which $\delta=n$ holds is a flat output \citep{MartinDevasiaPaden1996}.
\end{remark}

With the described approach, we get a tracking controller for $\z$, but the goal is to design a tracking controller for the real output $\y$. Note that the relationship from the virtual flat output to the real system output is given by
\begin{equation} \label{eq:yz}
 \y = \widetilde{\C} \begin{bmatrix} z_1 & \dot z_1 & \cdots & z_1^{(\kappa_1-1)} & \cdots & z_m & \dot z_m & \cdots & z_m^{(\kappa_m-1)} \end{bmatrix}^\tp .
\end{equation}
The differential equations \eqref{eq:yz} can be used to calculate the virtual reference output $\z^d$ from a desired reference output $\y^d$. Then, the feedback controller \eqref{eq:uFBT} regulates the flat output $\z$ to $\z^d$, which implies that $\y$ will be regulated to $\y^d$.

The problem of the FBT approach is that \eqref{eq:yz} describes the zero dynamics which is unstable in case of non-minimum phase systems. Thus, the solution of \eqref{eq:yz} might lead to unbounded reference signals $\z^d$ and, therefore, needs unbounded control inputs $\u$. 
A common way to bypass this problem is to approximate the virtual reference trajectory by a polynomial which fits the starting and ending point of the desired trajectory \citep{Fliess1998}. However, this method is only suitable for the change of the operating point along a specified trajectory, but not for the tracking of, e.g., sinusoidal signals. In addition, this approximation results in inexact tracking. 

\begin{remark}
 An alternative way to handle non-minimum phase systems is provided by the \emph{stable inversion theory} \citep{DevasiaChenPaden1996, ChenPaden1996}, but the obtained control input is non-causal and can lead to large input signals \citep{Tomlin1995}.
\end{remark}

For minimum phase systems, \eqref{eq:yz} describes stable zero dynamics and the calculation of $\z^d$ is unproblematic. In this case, the FBT approach provides an intuitive way to design a trajectory tracking controller.

\subsection{Review of ORT} \label{sec:ORT}

For the ORT approach, it is assumed that the trajectory to be tracked is described by an autonomous linear system (the exosystem)  
\begin{subequations} \label{eq:exo}
 \begin{align}
  \dot\o &= \S\o \\
  \y^d &= \Q\o ,
 \end{align}
\end{subequations}
with $\o \in \RZ^{r}$, $\y^d \in \RZ^{m}$. Like in the FBT approach, the goal is to design a feedback controller 
\begin{equation} \label{eq:uORT_KF}
 \u = -\K\x + \F\o
\end{equation}
which leads to a stable closed-loop system and achieves output regulation in the sense of \eqref{eq:goal}. Without loss of generality, it is assumed that all eigenvalues of $\S$ have nonnegative real part since eigenmodes with negative real part vanish asymptotically. According to \citep{FrancisWonham1976, Francis1977}, the following proposition is formulated. 

\begin{proposition} \label{prop:ORT}
 The output regulation problem is solvable if and only if there exists a matrix pair $(\mPi,\mGamma)$ solving the so-called \emph{regulator equations} (sometimes called \emph{Francis equations})
 \begin{subequations} \label{eq:reg}
 \begin{align} 
  \mPi\S &= \A\mPi + \B\mGamma \label{eq:reg1} \\
  \Q &= \C\mPi. \label{eq:reg2}
 \end{align}
 \end{subequations}
 Then, control law \eqref{eq:uORT_KF} achieves output regulation (or tracking) if $\K$ is chosen such that $\A-\B\K$ is stable and $\F=\K\mPi+\mGamma$.
\end{proposition}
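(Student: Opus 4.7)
The plan is to prove sufficiency by an explicit change of coordinates, and then sketch why the regulator equations are also necessary. The workhorse is the error variable $\e := \x - \mPi\o$, which absorbs the steady-state motion generated by the exosystem.

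For sufficiency, I would substitute the proposed control law $\u = -\K\x + (\K\mPi + \mGamma)\o$ into $\dot{\x} = \A\x + \B\u$ and compute
\begin{equation*}
 \dot{\e} = \dot{\x} - \mPi\dot{\o} = (\A - \B\K)\x + \B(\K\mPi + \mGamma)\o - \mPi\S\o.
\end{equation*}
The identity \eqref{eq:reg1}, namely $\mPi\S = \A\mPi + \B\mGamma$, is exactly what is needed to rewrite the $\o$-dependent part so that the terms involving $\mGamma$ and $\S$ cancel, leaving $\dot{\e} = (\A - \B\K)\e$. Since $\K$ is chosen so that $\A - \B\K$ is Hurwitz, $\e(t)\to\zero$ exponentially. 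Invoking \eqref{eq:reg2} then gives $\y - \y^d = \C\x - \Q\o = \C(\x - \mPi\o) = \C\e \to \zero$, which is exactly \eqref{eq:goal}.

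For necessity, the idea is to read off $(\mPi,\mGamma)$ from the closed-loop geometry. In closed loop the augmented state $(\x,\o)$ evolves under a block-triangular system with $\dot{\x} = (\A - \B\K)\x + \B\F\o$ and $\dot{\o} = \S\o$. By the standing assumption on $\S$, its eigenvalues lie in the closed right half-plane, while those of $\A - \B\K$ lie in the open left half-plane, so the two spectra are disjoint. By the Sylvester/Rosenblum theorem the equation $\mPi\S - (\A - \B\K)\mPi = \B\F$ therefore admits a unique solution $\mPi$, and each closed-loop trajectory decomposes as $\x(t) = \mPi\o(t) + \e(t)$ with $\e(t) \to \zero$. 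Defining $\mGamma := \F - \K\mPi$ immediately yields \eqref{eq:reg1}, and requiring output regulation to hold for every initial condition $\o(0)$ forces $(\C\mPi - \Q)\o(t) \to \zero$ for arbitrary $\o$, whence $\C\mPi = \Q$, i.e.\ \eqref{eq:reg2}.

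I expect the main obstacle to be the necessity direction, specifically the step that promotes the asymptotic identity $\C\mPi\o(t) = \Q\o(t)$ from a pointwise statement about trajectories to the algebraic identity $\C\mPi = \Q$. This requires the arbitrariness of $\o(0)$ together with the fact that $\S$ has no asymptotically decaying modes, so that the image of $t \mapsto \o(t)$ spans all of $\RZ^r$ as initial conditions vary. The sufficiency direction, by contrast, collapses to a single line of algebra once the error variable $\e = \x - \mPi\o$ is introduced, which is why in the exposition I would present sufficiency in full and merely indicate the Sylvester-based necessity argument, referring to \citep{FrancisWonham1976, Francis1977} for the classical proof.
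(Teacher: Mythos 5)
Your argument is correct, but note that the paper itself offers no proof to compare against: immediately after Proposition~\ref{prop:ORT} the author writes that the result is well known and refers the reader to \citet{SaberiStoorvogelSannuti2000}. What you have written is essentially the classical proof found in that reference and in \citet{FrancisWonham1976, Francis1977}: sufficiency via the error coordinate $\e=\x-\mPi\o$, whose dynamics collapse to $\dot\e=(\A-\B\K)\e$ precisely because of \eqref{eq:reg1}, and necessity by solving the Sylvester equation $\mPi\S-(\A-\B\K)\mPi=\B\F$ (solvable and unique since the spectra of the Hurwitz matrix $\A-\B\K$ and of $\S$ are disjoint) and then setting $\mGamma=\F-\K\mPi$. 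Both directions check out. The one step that genuinely needs the paper's standing assumption that all eigenvalues of $\S$ have nonnegative real part is the passage from $(\C\mPi-\Q)\o(t)\to\zero$ for every $\o(0)$ to the algebraic identity $\C\mPi=\Q$ -- a matrix $\M$ with $\M e^{\S t}\to\zero$ must vanish only when $\S$ has no decaying modes -- and you have correctly identified and used exactly that hypothesis. So the proposal fills in, rather than deviates from, the proof the paper chose to omit.
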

Since Proposition~\ref{prop:ORT} is a well known result, the proof is omitted and can be found for instance in \citep{SaberiStoorvogelSannuti2000}.

\begin{remark}
 The feedback matrix $\K$ in \eqref{eq:uORT_KF} must be determined such that $\A-\B\K$ is stable. Thus, for the ORT approach it is sufficient to have a stabilizable system instead of a controllable system.  
\end{remark}

The next proposition gives information about the solvability of the matrix equations \eqref{eq:reg}.
\begin{proposition} \label{prop:reg_eq}
 The regulator equations \eqref{eq:reg} are solvable if and only if 
 \begin{equation*}
  \rank{\R(\lambda)} = \rank{\begin{bmatrix} \lambda\I_n-\A & -\B \\ \C & \zero \end{bmatrix}} = n+m 
 \end{equation*}
 for every $\lambda$ which is an eigenvalue of $\S$. 
\end{proposition}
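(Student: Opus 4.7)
The plan is to reduce the two coupled matrix equations \eqref{eq:reg1}--\eqref{eq:reg2} to a family of decoupled vector equations indexed by the Jordan structure of $\S$, each of which is governed exactly by the Rosenbrock matrix $\R(\lambda)$. Without loss of generality I would first put $\S$ into Jordan canonical form via a similarity $\T^{-1}\S\T$; the corresponding substitutions $\mPi \mapsto \mPi\T$, $\mGamma \mapsto \mGamma\T$, $\Q \mapsto \Q\T$ leave \eqref{eq:reg} structurally unchanged, and the rank condition on $\R(\lambda)$ depends only on the similarity-invariant eigenvalues of $\S$.

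Next, for a single Jordan block of size $\ell$ at eigenvalue $\lambda$, matching column by column in \eqref{eq:reg1} and \eqref{eq:reg2} yields a finite recursion
\begin{equation*}
 \R(\lambda) \begin{bmatrix} \vpi_j \\ \vec{\gamma}_j \end{bmatrix} = \begin{bmatrix} -\vpi_{j-1} \\ \q_j \end{bmatrix}, \quad j=1,\ldots,\ell,
\end{equation*}
where $\vpi_j$, $\vec{\gamma}_j$ and $\q_j$ denote the columns of $\mPi$, $\mGamma$ and $\Q$ associated with the chain, and $\vpi_0 = \zero$. Sufficiency of the rank condition is then immediate: if $\rank{\R(\lambda)} = n+m$, then $\R(\lambda)$ is a square invertible $(n+m)\times(n+m)$ matrix, so the chain can be solved step by step, and gluing the solutions together across all Jordan blocks assembles a valid pair $(\mPi,\mGamma)$ for any $\Q$.

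For necessity I would assume $\rank{\R(\lambda_0)} < n+m$ for some eigenvalue $\lambda_0$ of $\S$ and exhibit a $\Q$ for which the recursion fails at its first step. Choose a nonzero left null vector $(\vec{\xi}^\tp,\vec{\eta}^\tp)$ of $\R(\lambda_0)$. The Popov--Belevitch--Hautus test, together with the standing controllability assumption, gives $\rank{[\lambda_0 \I_n-\A,\;\B]} = n$, which forces $\vec{\eta} \neq \zero$; otherwise $\vec{\xi}$ would be a nonzero left eigenvector of $\A$ at $\lambda_0$ that also annihilates $\B$. Solvability of the first equation in the chain then requires $\vec{\eta}^\tp \q_1 = 0$, a constraint that fails for generic $\q_1$, and any such $\q_1$ can be realized as the corresponding column of $\Q$.

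The main obstacle I anticipate is the bookkeeping in the Jordan-chain recursion: one has to verify that invertibility of $\R(\lambda)$ alone propagates solvability down the chain in spite of the inhomogeneous forcing $-\vpi_{j-1}$ carried over from earlier steps, and that the necessity argument is not rescued by the freedom in choosing $\vpi_{j-1}$. Both points collapse to the observation that a square full-rank $\R(\lambda)$ makes $(\vpi_j,\vec{\gamma}_j) \mapsto \R(\lambda)(\vpi_j,\vec{\gamma}_j)^\tp$ a bijection, so in the sufficiency direction neither prior forcing nor the structure of $\Q$ obstructs the recursion, while in the necessity direction the failure at the very first step is decisive.
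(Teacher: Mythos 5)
Your argument is correct, but it takes a genuinely different route from the one in the paper. You decompose the \emph{exosystem}: after putting $\S$ into Jordan form you read off the regulator equations column by column, which turns them into chains of linear systems whose coefficient matrix is exactly $\R(\lambda)$; full rank of the square matrix $\R(\lambda)$ then gives existence by forward substitution along each chain, and a left null vector of $\R(\lambda_0)$ combined with the PBH controllability test gives the converse. This is essentially Hautus' classical argument: it needs no canonical form for the plant, extends verbatim to non-square systems with more inputs than outputs (sufficiency then uses surjectivity of $\R(\lambda)$ instead of invertibility), but it is purely existential. The paper instead decomposes the \emph{plant}: with $(\A,\B,\C)$ in controllable canonical form it splits $\mPi$ into rows, obtains $\vpi_k=\vpi_1\S^{k-1}$, and reduces everything to the invertibility of the matrix polynomial $\M=b_0\I_r+\ldots+b_\tau\S^\tau$ (respectively the block matrices $\M_k$ in the MIMO case), which by Cayley--Hamilton is regular exactly when no plant zero is an eigenvalue of $\S$; Appendix~A then identifies this with the Rosenbrock rank condition. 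The payoff of the paper's heavier route is the closed-form solution \eqref{eq:regSISOsol}, \eqref{eq:PiGamma_sol}, which is what the subsequent ORT-to-FBT comparison actually consumes; your route proves Proposition~\ref{prop:reg_eq} more directly but does not deliver that formula. Two small points to tidy up: ``solvable'' must be read as ``solvable for every $\Q$'' (your necessity step, which manufactures a bad $\Q$, is consistent with this, as is the paper's reduction to $\vpi_1\M=\Q$), and for complex eigenvalues of $\S$ you should note that the regulator equations are real-linear, so a complex solution yields a real one by taking real parts, and that a \emph{real} $\Q$ violating the compatibility condition $\vec{\eta}^\tp\Q\v=0$ can always be found since that functional is nonzero.
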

This follows from a result presented in \citep{Hautus1983}. In this paper, an alternative proof will be derived in Section~\ref{sec:main}. Note that $\R(\lambda)$ is the \emph{Rosenbrock's} system matrix. Therefore, the rank condition in Proposition~\ref{prop:reg_eq} implies that  
none of the systems invariant zeros coincide with an eigenvalue of $\S$. 

Summing up, we conclude that if the reference trajectory is described by an exosystem of the form \eqref{eq:exo}, tracking can be achieved by a feedback controller \eqref{eq:uORT_KF} if and only if the rank condition in Proposition~\ref{prop:reg_eq} holds.

It should be noted that the classical ORT can be used not only to track a desired reference trajectory, but also to reject disturbances. This will be achieved if the exosystem is extended by the disturbance dynamics that need to be rejected. However, since we are interested in output tracking, the disturbance rejection problem is not considered in the following.

The problem of the ORT approach is that $\y^d$ must be determined by an exosystem of the form \eqref{eq:exo}. This restricts the class of treatable reference trajectories. For the FBT approach, the reference trajectory can be any function which is sufficiently often differentiable, however, with the drawback that -- in contrast to the ORT approach -- it is not directly applicable to non-minimum phase systems. 

\begin{remark}
 The important class of \emph{Bohl functions} (combinations of polynomial, exponential and sinusoidal signals) can be represented by exosystems \citep{Trentelman2001}. 
\end{remark}

\section{From ORT to FBT} \label{sec:main}

In the following, an analytic solution of the regulator equations \eqref{eq:reg} is presented. Based on this analytic solution, it is shown that the tracking controller of the ORT is identical to the tracking controller of the FBT for systems with $\delta = n$. Furthermore, the difficulties in case of $\delta < n$ are highlighted, and it is shown how the controller parameters given from ORT and FBT are correlated. 
For reasons of clarity, SISO systems are considered at first, before the general MIMO case is examined.

\subsection{SISO systems} \label{sec:SISO}

Let system \eqref{eq:plant} be a SISO system with transfer function 
\begin{equation} \label{eq:tf}
 g(s) = \frac{b_0 + b_1 \cdot s + \ldots + b_\tau \cdot s^\tau}{a_0 + a_1 \cdot s + \ldots + a_{n-1} \cdot s^{n-1} + s^n}, \quad \tau < n . 
\end{equation}
Without loss of generality, it is assumed that the system matrices are given in controllable canonical form, i.e.
\begin{equation} \label{eq:ccfSISO}
 \A = \begin{bmatrix} 0 & 1 & \cdots & 0 \\ \vdots & \vdots & \ddots & \vdots \\ 0 & 0 & \cdots & 1 \\ -a_0 & -a_1 & \cdots & -a_{n-1} \end{bmatrix}, \quad \B = \begin{bmatrix} 0 \\ \vdots \\ 0 \\ 1 \end{bmatrix}, \quad \C^\tp = \begin{bmatrix} b_0 \\ \vdots \\ b_\tau \\ \zero \end{bmatrix} .
\end{equation}
Then, the regulator equations \eqref{eq:reg} read
\begin{subequations} \label{eq:regSISO}
\allowdisplaybreaks
\begin{align}
 \begin{bmatrix} \vpi_1 \\ \vpi_{2} \\ \vdots \\ \vpi_n \end{bmatrix} \S &= \begin{bmatrix} 0 & 1 & \cdots & 0 \\ \vdots & \vdots & \ddots & \vdots \\ 0 & 0 & \cdots & 1 \\ -a_0 & -a_1 & \cdots & -a_{n-1} \end{bmatrix} \begin{bmatrix} \vpi_1 \\ \vpi_{2} \\ \vdots \\ \vpi_n \end{bmatrix} + \begin{bmatrix} 0 \\ \vdots \\ 0 \\ 1 \end{bmatrix} \mGamma  \label{eq:regSISO1} \\
 \Q &= \begin{bmatrix} b_0 & \cdots & b_\tau & 0 & \cdots & 0 \end{bmatrix} \begin{bmatrix} \vpi_1 \\ \vpi_{2} \\ \vdots \\ \vpi_n \end{bmatrix} , \label{eq:regSISO2} 
\end{align}
\end{subequations}
where the matrix $\mPi$ is split into its row vectors. From the first $n-1$ equations in \eqref{eq:regSISO1}, it follows that
\begin{equation} \label{eq:pi_k}
 \vpi_k = \vpi_1 \S^{k-1}, \quad k = 2,\ldots, n .
\end{equation}
Taking this into account, the last equation in \eqref{eq:regSISO1} can be written as
\begin{equation}
  \mGamma = \vpi_1 \underbrace{(a_0\I_r + a_1\S + \ldots + a_{n-1}\S^{n-1} + \S^n)}_{\N} = \vpi_1 \N .
\end{equation}
Furthermore, from \eqref{eq:regSISO2} and \eqref{eq:pi_k}, we get
\begin{equation} \label{eq:Q-pi_1}
 \Q = \vpi_1 \underbrace{(b_0\I_r + b_1\S + \ldots + b_{\tau}\S^{\tau})}_{\M} = \vpi_1 \M .
\end{equation}
Note that the coefficients $b_0,\ldots,b_\tau$ are the coefficients of the zero polynomial of the plant (cf. \eqref{eq:tf}). From the \emph{Cayley-Hamilton} theorem \citep{Gantmacher1959}, it is clear that $\M$ is regular if and only if no plant zero coincides with an eigenvalue of the exosystem. In this case, \eqref{eq:Q-pi_1} yields 
\begin{equation}
 \vpi_1 = \Q \M^{-1} 
\end{equation}
and, finally, the solution of the regulator equations reads 
\begin{equation} \label{eq:regSISOsol}
 \mPi = \begin{bmatrix} \Q \M^{-1} \\ \Q \M^{-1} \S \\ \vdots \\ \Q \M^{-1}\S^{n-1} \end{bmatrix}, \quad \mGamma = \Q \M^{-1} \N . 
\end{equation}
The above derivation is also a proof for Proposition~\ref{prop:reg_eq}.

\begin{theorem} \label{thm:ORT_FBT_SISO}
 Given system \eqref{eq:tf} or \eqref{eq:ccfSISO} has no zero dynamics, i.e. $\delta = n$, ORT and FBT lead to the same control law. 
\end{theorem}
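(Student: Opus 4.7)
The plan is to expand both control laws in terms of the flat output derivatives $z^{(k)}$ and $z^{d(k)}$ and show they coincide term by term. The first step is to exploit the hypothesis $\delta = n$: in the SISO form \eqref{eq:tf} the relative degree equals $n-\tau$, so $\delta = n$ forces $\tau = 0$ and the numerator of \eqref{eq:tf} is simply the constant $b_0$. Consequently $\M = b_0\I_r$, which is automatically regular, and the general formula \eqref{eq:regSISOsol} collapses to an explicit $\mPi$ whose $k$-th row is $b_0^{-1}\Q\S^{k-1}$ together with $\mGamma = b_0^{-1}\Q(a_0\I_r + a_1\S + \ldots + a_{n-1}\S^{n-1} + \S^n)$.

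The second step is to identify these ORT objects with the FBT data. Since $\x$ is already in controllable canonical form, the only Kronecker index is $\kappa_1 = n$, the flat output \eqref{eq:z} is $z = x_1$, and \eqref{eq:xz} gives $x_{k+1} = z^{(k)}$. The output relation \eqref{eq:yz} reduces to $y = b_0 z$, so the virtual reference is $z^d = y^d/b_0 = b_0^{-1}\Q\o$, and $\dot\o = \S\o$ yields $z^{d(k)} = b_0^{-1}\Q\S^k\o$. Substituting shows that $\mPi\o = [z^d,\dot z^d,\ldots,z^{d(n-1)}]^\tp$ and $\mGamma\o = z^{d(n)} + \sum_{k=0}^{n-1} a_k z^{d(k)}$; this is the bridge between the two viewpoints.

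The third step is to expand both control laws as linear combinations of $\{z^{(k)}\}$ and $\{z^{d(k)}\}$. The ORT law $u = -\K\x + (\K\mPi + \mGamma)\o$ becomes $u = z^{d(n)} + \sum_{k=0}^{n-1} a_k z^{d(k)} + \sum_{k=0}^{n-1} K_{k+1}(z^{d(k)} - z^{(k)})$, while solving the scalar error equation \eqref{eq:ez} for $u$ via $z^{(n)} = u - \sum_{k=0}^{n-1} a_k z^{(k)}$ yields $u = z^{d(n)} + \sum_{k=0}^{n-1} a_k z^{(k)} + \sum_{k=0}^{n-1} p_{k}(z^{d(k)} - z^{(k)})$. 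Matching coefficients of $z^{d(k)} - z^{(k)}$ forces $K_{k+1} = p_k - a_k$; with this $\K$ the closed-loop matrix $\A - \B\K$ is the companion matrix with bottom row $[-p_0,\ldots,-p_{n-1}]$, whose characteristic polynomial is exactly the one fixed by \eqref{eq:ez}. Any stable choice on one side is then a stable choice on the other, and the two control laws coincide.

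The only real obstacle is bookkeeping: tracking the $b_0^{-1}$ factors and the index shift between $\vpi_k = \vpi_1\S^{k-1}$ and $z^{d(k)} = b_0^{-1}\Q\S^k\o$. Conceptually the single point to verify is that the stabilizing $\K$ of ORT and the error-dynamics coefficients $p_k$ of FBT are two parameterizations of the same pole-placement map for $\A - \B\K$, which is immediate from the companion-form structure exploited in both derivations.
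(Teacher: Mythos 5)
Your proposal is correct and follows essentially the same route as the paper: it uses $\tau=0$ to get $\M=b_0\I_r$, specializes \eqref{eq:regSISOsol} to read off $\mPi\o$ and $\mGamma\o$ as the reference output and its derivatives, and then matches the resulting feedforward coefficients with those coming from the error dynamics via $k_i = p_{i-1}-a_{i-1}$. The only cosmetic difference is that you carry the $1/b_0$ factor inside the flat reference $z^d=y^d/b_0$ while the paper keeps it in the prefilter $\F$ acting on $\y^d_\Delta$; the computations are otherwise identical.
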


\begin{proof}
 First note that $\tau = 0$ for $\delta = n$,  
 and $y = b_0 \cdot x_1$ is a flat output. Thus, as described for the FBT approach, taking the error dynamics 
 \begin{equation} \label{eq:error_dyn}
  (y^{(n)} - y^{d \, (n)}) + \ldots + p_1(\dot{y} - \dot{y}^d) + p_0(y - y^d) = 0
 \end{equation}
 and considering \eqref{eq:ccfSISO}, we get 
 \begin{align} \label{eq:uFBT_SISO}
  u &= -k_1 \cdot x_1 - k_2 \cdot x_2 -\ldots - k_n \cdot x_n+\frac{1}{b_0}(p_0 \cdot y^d + p_1 \cdot \dot{y}^d+\ldots+y^{d \, (n)}) \notag \\
    &= -\underbrace{\begin{bmatrix} k_1 & k_2 & \cdots & k_n \end{bmatrix}}_{\K} \x + \underbrace{\frac{1}{b_0} \begin{bmatrix} p_0 & p_1 & \cdots & 1 \end{bmatrix}}_{\F} \y^d_\Delta ,
 \end{align}
 with $k_i = p_{i-1}-a_{i-1}$ and $\y^d_\Delta = \begin{bmatrix} y^d & \dot{y}^d & \cdots & y^{d \, (n)} \end{bmatrix}^\tp$. The controller gain $\K$ and pre-filter $\F$ are determined by the parameters $p_k$ which are the coefficients of the specified error dynamics \eqref{eq:error_dyn}.  

 In the ORT approach, the tracking controller is given by
 \begin{equation} \label{eq:uORT}
  u = -\K \x + (\K\mPi+\mGamma) \o.  
 \end{equation}
 Furthermore, with \eqref{eq:regSISOsol} and $\M^{-1} = \frac{1}{b_0}\I_r$, we have 
 \begin{subequations} \label{eq:PoGp}
 \begin{align}
  \mPi\o &= \frac{1}{b_0} \begin{bmatrix} \Q\o \\ \Q\S\o \\ \vdots \\ \Q\S^{n-1}\o \end{bmatrix} = \frac{1}{b_0} \begin{bmatrix} y^d \\ \dot{y}^d \\ \vdots \\ y^{d \, (n-1)} \end{bmatrix} \\
  \mGamma\o &= \frac{1}{b_0} \Q (a_0\I_r + a_1\S + \ldots + a_{n-1}\S^{n-1} + \S^n) \o \\ 
  &= \frac{1}{b_0} (a_0 \cdot y^d + a_1 \cdot \dot{y}^d + \ldots + y^{d \, (n)}) .
 \end{align}
 \end{subequations}
 Therefore, the tracking controller can be written as 
 \begin{equation} \label{eq:uORT_SISO}
  u = - \underbrace{\begin{bmatrix} k_1 & k_2 & \cdots & k_n \end{bmatrix}}_{\K} \x + \underbrace{\frac{1}{b_0}\begin{bmatrix} k_1+a_0 & k_2+a_1 & \cdots & 1 \end{bmatrix}}_{\F} \y^d_\Delta .
 \end{equation}
 Substituting $k_i + a_{i-1} = p_{i-1}$, \eqref{eq:uORT_SISO} is exactly the same as \eqref{eq:uFBT_SISO}.  
\end{proof}

It should be noted that the coefficients of the closed-loop characteristic polynomial are $k_i + a_{i-1}$ (since the system is in controllable canonical form). Thus, choosing $k_i = p_{i-1} - a_{i-1}$, the closed-loop characteristic polynomial and the tracking error dynamics \eqref{eq:error_dyn} are both determined by the coefficients $p_{i-1}$. 

The above result shows that for systems without zero dynamics, ORT and FBT lead to the same control law. Thus, the following is a logical consequence.

\begin{corollary} \label{col:exo}
 In the ORT approach, the reference trajectory does not need to be determined by an exosystem of the form \eqref{eq:exo} if the plant has no zero dynamics.
\end{corollary}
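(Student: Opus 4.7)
The plan is to piggy-back directly on Theorem~\ref{thm:ORT_FBT_SISO}. That theorem showed that for $\delta = n$ the ORT control law \eqref{eq:uORT_SISO} coincides with the FBT control law \eqref{eq:uFBT_SISO}, which is a pure state feedback together with a feedforward built only from $y^d$ and its time derivatives up to order $n$. Since the FBT derivation never invokes an autonomous generator for $y^d$---it only relies on the prescribed error dynamics \eqref{eq:error_dyn}---the exosystem hypothesis can be dropped whenever this rewrite is available.

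Concretely, I would inspect \eqref{eq:PoGp}: every occurrence of $\o$ in the ORT controller enters through the products $\mPi\o$ and $\mGamma\o$, and by that lemma both reduce to linear combinations of $y^d, \dot y^d, \ldots, y^{d\,(n)}$. Hence the feedforward term $(\K\mPi+\mGamma)\o$ equals $\F\y^d_\Delta$, where $\F$ depends only on the plant coefficients $a_i, b_0$ and the tuning parameters $p_k$, not on $\S$ or $\Q$. One may therefore implement the controller for any sufficiently smooth reference $y^d$ by supplying its derivatives directly (for instance, from a trajectory planner), without ever realizing $y^d$ as the output of some $\dot\o = \S\o$, $y^d = \Q\o$.

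To close the argument I would verify that asymptotic tracking is preserved for this wider class of references. Substituting \eqref{eq:uFBT_SISO} into the plant \eqref{eq:ccfSISO} and using $y = b_0 \cdot x_1$ reproduces exactly the error equation \eqref{eq:error_dyn}, whose characteristic polynomial $s^n + p_{n-1} s^{n-1} + \cdots + p_0$ is Hurwitz by choice of the $p_k$; consequently $y - y^d \to 0$ for every $n$-times continuously differentiable $y^d$, as required by \eqref{eq:goal}. The main conceptual hurdle is to reassure oneself that bypassing the exosystem does not conflict with the internal model principle underlying ORT; it does not in this regime, because $\delta = n$ makes the plant right-invertible on $y$, so the required feedforward can be synthesized algebraically from $\y^d_\Delta$ alone and no internal model of the reference is needed.
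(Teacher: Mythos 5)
Your proposal is correct and follows essentially the same route as the paper, which presents the corollary as an immediate consequence of Theorem~\ref{thm:ORT_FBT_SISO} via the rewriting \eqref{eq:PoGp} of $\mPi\o$ and $\mGamma\o$ in terms of $y^d,\dot y^d,\ldots,y^{d\,(n)}$. Your additional check that the Hurwitz error dynamics \eqref{eq:error_dyn} yields $y-y^d\to 0$ for any $n$-times differentiable reference is exactly the step the paper leaves implicit (it is covered by the FBT review in Section~\ref{sec:FBT}), so no gap remains.
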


For systems without zero dynamics, we only need to know $y^d$ and its time derivatives up to the order $y^{d \, (n)}$, like in the FBT approach. If there are zero dynamics, i.e. $\tau>0$, reformulating the feedforward part of the tracking controller as in \eqref{eq:PoGp} is not possible and the exosystem matrices $\S$ and $\Q$ appear in the control law \eqref{eq:uORT_SISO}. 
The next result presents how the ORT controller can be converted into a FBT-like control law even if the plant is determined by zero dynamics. 

\begin{corollary} \label{col:cancel_pz}
 An exosystem formulation can be omitted if the plant zeros are canceled by closed-loop poles.
\end{corollary}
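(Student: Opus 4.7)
The plan is to reuse the closed-form solution \eqref{eq:regSISOsol} of the regulator equations and to exploit the commutativity of matrix polynomials in the single matrix $\S$. Starting from the ORT law \eqref{eq:uORT}, I would collect $\K\mPi\o$ and $\mGamma\o$ over the common left factor $\Q\M^{-1}$. Concretely, since the $k$-th row of $\mPi$ is $\vpi_1\S^{k-1}=\Q\M^{-1}\S^{k-1}$ and $\mGamma=\Q\M^{-1}\N$, one obtains
\begin{equation*}
 (\K\mPi+\mGamma)\o = \Q\M^{-1}\bigl( k_1\I_r + k_2\S + \ldots + k_n\S^{n-1} + \N\bigr)\o = \Q\M^{-1}\P\,\o,
\end{equation*}
where $\P = p_0\I_r + p_1\S + \ldots + p_{n-1}\S^{n-1}+\S^n$ with $p_{i-1}=k_i+a_{i-1}$, i.e. $P(s)=\det(s\I_n-(\A-\B\K))$ evaluated at $\S$. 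This is the pivotal identity and basically just re-expresses the feedforward gain in terms of the closed-loop characteristic polynomial.

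Next I would invoke the cancellation hypothesis: if the plant zeros are canceled by closed-loop poles, then, as scalar polynomials, $P(s)=M(s)R(s)$ for some $R(s)=r_0+r_1 s+\ldots+s^{n-\tau}$ of degree $n-\tau$. Because $\M$, $\N$, $\P$ and $\R:=r_0\I_r+r_1\S+\ldots+\S^{n-\tau}$ are all polynomials in the \emph{single} matrix $\S$, they commute and the scalar factorization lifts verbatim to $\P=\M\R$. Substituting back,
\begin{equation*}
 (\K\mPi+\mGamma)\o = \Q\M^{-1}\M\R\,\o = \Q\R\,\o = r_0\Q\o + r_1\Q\S\o + \ldots + \Q\S^{n-\tau}\o,
\end{equation*}
and since $\Q\S^{k}\o = y^{d\,(k)}$, the feedforward reduces to $r_0 y^d + r_1\dot y^d + \ldots + y^{d\,(n-\tau)}$. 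The resulting ORT tracking law
\begin{equation*}
 u = -\K\x + \begin{bmatrix} r_0 & r_1 & \cdots & 1\end{bmatrix}\begin{bmatrix} y^d & \dot y^d & \cdots & y^{d\,(n-\tau)}\end{bmatrix}^\tp
\end{equation*}
no longer contains $\S$ or $\Q$ and has exactly the FBT-like structure; hence no exosystem description of $y^d$ is required, only sufficient differentiability up to order $n-\tau$.

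The only delicate point I anticipate is the step $\P=\M\R$, which must be justified from the scalar factorization $P(s)=M(s)R(s)$. I would state it as an immediate consequence of the ring homomorphism $p(s)\mapsto p(\S)$ on $\RZ[s]$, noting that commutativity is automatic because every matrix involved is a polynomial in the same $\S$. Everything else is algebraic bookkeeping using \eqref{eq:regSISOsol} and the controllable canonical form \eqref{eq:ccfSISO}. A small caveat worth mentioning is that the regulator equations still require $\M$ to be invertible, so no exosystem eigenvalue may coincide with a plant zero; the cancellation here is between closed-loop poles and plant zeros, not between exosystem modes and plant zeros, so the two conditions are compatible.
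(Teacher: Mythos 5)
Your proof is correct and follows essentially the same route as the paper: both factor the closed-loop characteristic polynomial evaluated at $\S$ into the plant's zero polynomial times a degree-$\delta=n-\tau$ cofactor, cancel $\M^{-1}$ against the zero-polynomial factor, and read off a feedforward depending only on $y^d,\dot y^d,\ldots,y^{d\,(n-\tau)}$. The only slip is a normalization: since $M(s)$ has leading coefficient $b_\tau$ while $P(s)$ is monic, your cofactor $R(s)$ must have leading coefficient $1/b_\tau$ rather than $1$, which is exactly the $\tfrac{1}{b_\tau}\,y^{d\,(\delta)}$ term appearing in the paper's final control law.
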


\begin{proof}
 Given that $\tau > 0$ and taking \eqref{eq:regSISOsol} into account, the tracking controller \eqref{eq:uORT} is written as
 \begin{equation} \label{eq:uORT_SISO_gen}
  u = -\K\x + \Q\M^{-1} \underbrace{\Big( \widetilde{k}_1 \I_r + \ldots + \widetilde{k}_n \S^{n-1} + \S^n \Big)}_{\K_S} \o,
 \end{equation}
 with $\widetilde{k}_i = k_i+a_{i-1}$. Furthermore, the controller gains can be chosen such that 
 \begin{equation} \label{eq:KS}
  \K_S = \Big( \overline{k}_1 \I_r + \ldots + \overline{k}_{\tau+1} \S^{\tau} \Big) \Big( \widehat{k}_1 \I_r + \ldots + \widehat{k}_{\delta} \S^{\delta-1} + \frac{1}{\overline{k}_{\tau+1}} \S^{\delta} \Big),
 \end{equation}
 where $\overline{k}_i$ and $\widehat{k}_j$ are the free parameters and $\delta = n-\tau$. Including \eqref{eq:KS} into \eqref{eq:uORT_SISO_gen}, we see that for $\overline{k}_i = b_{i-1}$ the matrix $\M^{-1}$ cancels out, leading to
 \begin{subequations} \label{eq:uORT_SISO_red}
 \begin{align} 
  u &= -\K\x + \Q \Big( \widehat{k}_1 \I_r + \ldots + \widehat{k}_{\delta} \S^{\delta-1} + \frac{1}{b_\tau} \S^{\delta} \Big) \o \\
    &= -\K\x +\Big( \widehat{k}_1 \cdot y^d +\ldots +\widehat{k}_{\delta} \cdot y^{d \, (\delta-1)} +\frac{1}{b_\tau} \cdot y^{d \, (\delta)} \Big).
 \end{align}
 \end{subequations}
 Note that the choice $\overline{k}_i = b_{i-1}$ implies pole-zero cancellations in the closed-loop system $(\A-\B\K,\B,\C)$. However, \eqref{eq:uORT_SISO_red} depends only on $y^d, \dot{y}^d, \ldots, y^{d \, (\delta)}$ and the exosystem can be omitted.  
\end{proof}

Corollary~\ref{col:cancel_pz} is not surprising since the transfer function of the closed-loop system is given by 
\begin{equation*}
 g(s) =\frac{b_0 + \ldots + b_\tau \cdot s^\tau}{\widetilde{k}_1 + \widetilde{k}_2 \cdot s + \ldots + s^n} 
      =\frac{b_0 + \ldots + b_\tau \cdot s^\tau}{\overline{k}_1 + \ldots + \overline{k}_{\tau+1} \cdot s^\tau} \cdot \frac{1}{\widehat{k}_1 + \ldots + \frac{1}{\overline{k}_{\tau+1}} s^{\delta}} ,
\end{equation*}
 and choosing $\overline{k}_i = b_{i-1}$ yields
\begin{equation} \label{eq:tf_cl}
 g(s) = \frac{1}{\widehat{k}_1 + \ldots + \widehat{k}_{\delta} s^{\delta-1} + \frac{1}{b_\tau} s^{\delta}} .
\end{equation}
Transfer function \eqref{eq:tf_cl} describes a zero free system with reduced dynamical order $\delta=n-\tau$ for which Theorem~\ref{thm:ORT_FBT_SISO} is applicable. 

It is clear that Corollary~\ref{col:cancel_pz} is only useful for minimum phase systems since for non-minimum phase systems the cancellation of zeros leads to an unstable behavior. Summarizing the results of Corollary~\ref{col:exo} and \ref{col:cancel_pz}, we conclude that the ORT tracking controller can be converted into a FBT-like control law. This also implies that for the ORT approach, the exosystem can be omitted in case of minimum phase systems. Only $y^d$ and its time derivatives up to order $y^{d \, (\delta)}$ are needed for output tracking as in the FBT approach. 
It should be noted that for the ORT approach, the controller gain does not need to be determined as in Corollary~\ref{col:cancel_pz}, but then the reference trajectory must be provided by an exosystem. Thus, in contrast to the FBT approach, the ORT approach can handle unstable zero dynamics, albeit at the expense of the treatable class of reference trajectories.

The results presented for SISO systems are extended to the MIMO case in the following. 

\subsection{MIMO systems} \label{sec:MIMO}

Since system \eqref{eq:plant} is controllable, we can transform its system matrices into the MIMO controllable canonical form \eqref{eq:ccf}. In addition, it is always possible to choose $\u = -\widetilde{\K} \widetilde{\x} + \widetilde{\F} \widetilde{\u}$ such that
\begin{equation} \label{eq:A'B'}
 \dot{\widetilde{\x}} = \underbrace{( \widetilde{\A}-\widetilde{\B}\widetilde{\K} )}_{\A'} \widetilde{\x} + \underbrace{\widetilde{\B}\widetilde{\F}}_{\B'} \widetilde{\u} 
\end{equation}
with 
\begin{subequations} \label{eq:bcf}
\allowdisplaybreaks
\begin{align}
 \A' = \begin{bmatrix} \A'_{11} & \zero & \cdots & \zero \\ \zero & \A'_{22} & \cdots & \zero \\ \vdots & \vdots & \ddots & \vdots \\ \zero & \zero & \cdots & \A'_{mm} \end{bmatrix}, \quad
 \B' = &\begin{bmatrix} \B'_{11} & \zero & \cdots & \zero \\ \zero & \B'_{22} & \cdots & \zero \\ \vdots & \vdots & \ddots & \vdots \\ \zero & \zero & \cdots & \B'_{mm} \end{bmatrix} , \label{eq:bcf_sys} \\
 \A'_{ii} = \begin{bmatrix} 0 & 1 &  \cdots & 0 \\ \vdots & \vdots & \ddots & \vdots \\ 0 & 0 & \cdots & 1 \\ 0 & 0 & \cdots & 0 \end{bmatrix}, 
 \qquad
 &\B'_{ii} = \begin{bmatrix} 0 \\ \vdots \\ 0 \\ 1 \end{bmatrix}. \label{eq:bnf_ii}
\end{align}
\end{subequations}
This means that system \eqref{eq:A'B'} is described by integrator chains of length $\kappa_i$. 
Note that $\widetilde{\C}$ has no specific structure. Since such input and state transformations are always possible for controllable systems, in the following it is assumed that the MIMO system \eqref{eq:plant} is determined by
\begin{subequations} \label{eq:ccfMIMO}
\begin{align}
 \dot{\x} &= \begin{bmatrix} \A_{11} & \cdots & \zero \\ \vdots & \ddots & \vdots \\ \zero & \cdots & \A_{mm} \end{bmatrix} \x + \begin{bmatrix} \B_{11} & \cdots & \zero \\ \vdots & \ddots & \vdots \\ \zero & \cdots & \B _{mm} \end{bmatrix} \u \\
 \y &= \begin{bmatrix} \C_{11} & \cdots & \C_{1m} \\ \vdots & \ddots & \vdots \\ \C_{m1} & \cdots & \C_{mm} \end{bmatrix} \x ,
\end{align}
\end{subequations}
where $\A_{ii}$ and $\B_{ii}$ are given as in \eqref{eq:bcf}. (The symbols $\; _{\widetilde{\phantom{aa}}} \;$ and $\; \prime \;$ are omitted for simplicity.)

Furthermore, we assume that the exosystem matrices are specified by 
\begin{equation} \label{eq:exoMIMO}
 \S = \begin{bmatrix} \S_{1} & \cdots & \zero \\ \vdots & \ddots & \vdots \\ \zero & \cdots & \S_{m} \end{bmatrix} 
  \quad \text{ and } \quad  
 \Q = \begin{bmatrix} \Q_{1} & \cdots & \zero \\ \vdots & \ddots & \vdots \\ \zero & \cdots & \Q_{m} \end{bmatrix}, 
\end{equation}
i.e. every reference output is described by its own exosystem $\S_k \in \RZ^{r_k \times r_k}$, $\Q_k \in \RZ^{1 \times r_k}$. This assumption is not restrictive since every output has (usually) its own reference trajectory.\footnote{The assumption can be relaxed for the subsequent steps, but it considerably simplifies the derivations.} 

Partitioning the matrices $\mPi$ and $\mGamma$ in \eqref{eq:reg} as
\begin{equation} \label{eq:PiGamma}
 \begin{bmatrix} \mPi_{11} & \cdots & \mPi_{1m} \\ \vdots & \ddots & \vdots \\ \mPi_{m1} & \cdots & \mPi_{mm} \end{bmatrix} 
 \quad \text{ and } \quad
 \begin{bmatrix}\mGamma_{11} &\cdots & \mGamma_{1m} \\ \vdots & \ddots & \vdots \\ \mGamma_{m1} & \cdots & \mGamma_{mm}\end{bmatrix},
\end{equation}
it is not hard to see that 
\begin{equation}
 \mPi_{ij} \S_j = \A_{ii} \mPi_{ij} + \B_{ii} \mGamma_{ij}, \quad i,j \in \{1,\ldots,m\}. 
\end{equation}
Note that $\A_{ii}$ and $\B_{ii}$ are given as in \eqref{eq:bnf_ii}. With the result presented in Section~\ref{sec:SISO}, it follows immediately that 
\begin{equation} \label{eq:PiGamma_sol}
 \mPi_{ij} = \begin{bmatrix} \vpi_{ij}^1 \\ \vpi_{ij}^1 \S_j \\ \vdots \\ \vpi_{ij}^1 \S_j^{\kappa_i-1} \end{bmatrix} 
 \quad \text{ and } \quad 
 \mGamma_{ij} = \vpi_{ij}^1 \S_j^{\kappa_i} . 
\end{equation}
Hence, the only unknowns are the row vectors $\vpi_{ij}^1$ ($i,j \in \{1,\ldots,m\}$), which can be determined from the second regulator equation \eqref{eq:reg2}:
\begin{equation} \label{eq:reg2MIMO}
 \begin{bmatrix} \Q_{1} & \cdots & \zero \\ \vdots & \ddots & \vdots \\ \zero & \cdots & \Q_{m} \end{bmatrix} 
 =  \begin{bmatrix} \sum_{l=1}^m \C_{1l}\mPi_{l1} & \cdots & \sum_{l=1}^m \C_{1l}\mPi_{lm} \\ \vdots & \ddots & \vdots \\ \sum_{l=1}^m \C_{ml}\mPi_{l1} & \cdots & \sum_{l=1}^m \C_{ml}\mPi_{lm} \end{bmatrix} .
\end{equation}

Let $\C_{ij} = \begin{bmatrix} c_{ij}^{0} & c_{ij}^{1} & \cdots & c_{ij}^{\kappa_j-1} \end{bmatrix}$, the entries in the $k$-th column block of \eqref{eq:reg2MIMO} are given by
\begin{equation} \label{eq:kil}
 \sum_{l=1}^m \vpi_{lk}^1 \underbrace{\sum_{\nu=0}^{\kappa_l-1} c_{il}^\nu \S_k^\nu}_{\M_{kil}} = \begin{cases}\Q_k, &i=k, \\ \zero, &i \ne k.\end{cases}
\end{equation}
The index in $\M_{kil} \in \RZ^{r_k \times r_k}$ denotes the $l$-th summand of the $i$-th block entry in the $k$-th column block. Now, \eqref{eq:kil} can be rearranged to
\begin{subequations}
\begin{align}
 \begin{bmatrix}\vpi_{1k}^1 &\cdots& \vpi_{mk}^1 \end{bmatrix} \M_k = \begin{bmatrix}\zero &\cdots& \Q_k &\cdots& \zero \end{bmatrix} \\
 \intertext{with the square matrix}
 \M_k = \begin{bmatrix} \M_{k11} & \cdots & \M_{km1} \\ \vdots & \ddots & \vdots \\ \M_{k1m} & \cdots & \M_{kmm} \end{bmatrix} \in \RZ^{m r_k \times m r_k} .   
\end{align}
\end{subequations}
Thus, given that $\M_k$ is regular, the solution reads
\begin{equation} \label{eq:pi1k_sol}
 \begin{bmatrix}\vpi_{1k}^1 &\cdots& \vpi_{mk}^1\end{bmatrix} = \begin{bmatrix}\zero &\cdots& \Q_k &\cdots& \zero\end{bmatrix} \M_k^{-1}. 
\end{equation}

\begin{remark}
 A simple construction rule for $\M_k$ is
 \begin{subequations} \label{eq:Mk}
 \begin{align}
  \M_k &= \overline{\S}_{k} (\C^\tp \otimes \I_{r_k}) 
  \intertext{where $\overline{\S}_{k} \in \RZ^{m r_k \times n r_k}$ is defined as}
  \overline{\S}_{k} &= \begin{bmatrix} \I_{r_k} & \S_k & \cdots & \S_k^{\kappa_1-1} & \cdots & \zero & \zero & \cdots & \zero \\ \vdots & \vdots &  & \vdots & \ddots & \vdots & \vdots &  & \vdots \\ \zero & \zero & \cdots & \zero & \cdots & \I_{r_k} & \S_k & \cdots & \S_k^{\kappa_m-1} \end{bmatrix} .
 \end{align}
 \end{subequations}
\end{remark}

From \eqref{eq:pi1k_sol}, it is clear that the solvability of the regulator equations depends on the invertibility of the matrices $\M_k$, $k = 1,\ldots,m$. Hence, we have to check the regularity of $\M_k$. 

\begin{proposition} \label{prop:Mk}
 $\M_k$ is regular if and only if no invariant system zero coincides with an exosystem eigenvalue, i.e. if $\rank{\R(\lambda)}=n+m$ for every $\lambda$ which is an eigenvalue of $\S$.
\end{proposition}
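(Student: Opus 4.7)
The plan is to compute $\det(\M_k)$ in closed form and show that it vanishes precisely when an eigenvalue of $\S_k$ coincides with an invariant zero of the plant. The key structural observation is that every block $\M_{kil} = \sum_{\nu=0}^{\kappa_l-1} c_{il}^\nu \S_k^\nu = f_{il}(\S_k)$ is a polynomial in the single matrix $\S_k$, so all blocks of $\M_k$ commute pairwise. This should let me collapse the $m r_k \times m r_k$ block determinant to the evaluation of a single scalar polynomial at $\S_k$.

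To carry this out, I would first bring $\S_k$ to upper triangular form via a Schur decomposition with diagonal entries $\lambda_1, \ldots, \lambda_{r_k}$. Each $f_{il}(\S_k)$ then inherits an upper triangular form with diagonal $f_{il}(\lambda_1), \ldots, f_{il}(\lambda_{r_k})$. A simultaneous permutation of rows and columns of $\M_k$ that lets the scalar index inside $\S_k$-space vary fastest converts $\M_k$ into an $r_k \times r_k$ block matrix of $m \times m$ blocks which is block upper triangular, with the $j$th diagonal block equal to the scalar matrix $[f_{il}(\lambda_j)]_{l,i}$. Consequently,
\[
\det(\M_k) \;=\; \prod_{j=1}^{r_k} d(\lambda_j), \qquad d(x) \;:=\; \det\bigl([f_{il}(x)]_{l,i}\bigr).
\]

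Next, I would identify $d(x)$ with $\det(\R(x))$. For the canonical form \eqref{eq:ccfMIMO} a direct computation gives $(x\I - \A_{ll})^{-1}\B_{ll} = x^{-\kappa_l}[1,\, x,\, \ldots,\, x^{\kappa_l-1}]^\tp$, so the transfer function $G(x) = \C(x\I - \A)^{-1}\B$ has entries $G_{il}(x) = f_{il}(x)/x^{\kappa_l}$, and $\det(x\I - \A) = \prod_l x^{\kappa_l}$. The Schur complement expansion of $\R(x)$ at any $x$ for which $x\I - \A$ is invertible yields $\det(\R(x)) = \det(x\I - \A)\det(G(x)) = d(x)$, and since both sides are polynomials in $x$ the identity extends to all of $\mathbb{C}$. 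Hence $\det(\M_k) = \prod_j \det(\R(\lambda_j))$ is nonzero iff no eigenvalue of $\S_k$ coincides with an invariant zero. Because the spectrum of $\S = \textnormal{diag}(\S_1, \ldots, \S_m)$ is the union of the spectra of the $\S_k$, the rank condition $\rank{\R(\lambda)} = n+m$ at every eigenvalue $\lambda$ of $\S$ is equivalent to the regularity of all $\M_k$, which is what Proposition~\ref{prop:Mk} asserts.

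The step I expect to be most delicate is the permutation-and-triangularization argument that reduces the block determinant to $\prod_j d(\lambda_j)$; although standard, the index bookkeeping is easy to botch, so I would verify the permutation in a small case ($m = 2$, $r_k = 2$) before presenting it for general dimensions. The remaining identifications---the explicit formula for $G(x)$ in the canonical coordinates, the Schur complement for $\R(x)$, and the spectral factorization $\det(d(\S_k)) = \prod_j d(\lambda_j)$---are direct consequences of the canonical form already in force and should fit on a few lines each.
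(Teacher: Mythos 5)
Your proof is correct, and it reaches Proposition~\ref{prop:Mk} by a genuinely different route from the paper's. The paper never computes $\det{\M_k}$: it builds explicit elimination matrices $\mXi_k$ and $\mPsi_k$, forms the product $\mTheta_k\Z_k$ in which $\M_k$ appears as the lower-right block of a block-triangular matrix alongside the regular block $\mPhi_k$, thereby reducing regularity of $\M_k$ to regularity of $\Z_k$; it then recognizes $\Z_k^\tp$ as $(\overline{\R}(\lambda)\otimes\I_{r_k})|_{\lambda=\S_k}$, where $\overline{\R}(\lambda)$ is a reduced Rosenbrock matrix whose determinant is the invariant-zero polynomial, and closes with Cayley--Hamilton. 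You instead evaluate $\det{\M_k}$ in closed form: the commuting-blocks/Schur-triangularization argument gives $\det{\M_k}=\prod_j d(\lambda_j)=\det{d(\S_k)}$, and the Schur-complement identity $\det{\R(x)}=\det{x\I_n-\A}\,\det{\C(x\I_n-\A)^{-1}\B}$ together with the explicit resolvent of the integrator chains identifies $d(x)$ with $\det{\R(x)}$. Both proofs rest on the same two pillars --- the spectral-mapping fact that a polynomial of $\S_k$ is regular iff no eigenvalue of $\S_k$ is a root, and the identification of that polynomial with the Rosenbrock determinant --- but yours buys the quantitative formula $\det{\M_k}=\prod_j\det{\R(\lambda_j)}$ at the cost of a genericity step ($x\I_n-\A$ invertible, then polynomial continuation), while the paper's factorization is determinant-free. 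Two minor points: after the simultaneous permutation the scalar (Schur) index must become the \emph{outer} block index, i.e.\ vary slowest, so that the result is an $r_k\times r_k$ array of $m\times m$ blocks that is block upper triangular --- your phrase ``vary fastest'' has this backwards, though the structure you then describe is the correct one; and your closing remark rightly sharpens the proposition's slightly loose statement, since each individual $\M_k$ only sees the eigenvalues of $\S_k$, the stated condition on all of $\S$ being equivalent to the regularity of all the $\M_k$ jointly.
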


The proof of Proposition~\ref{prop:Mk} is given in Appendix~\ref{app:prop:Mk}. 
The following theorem summarizes the results.

\begin{theorem} \label{thm:reg_eq_sol}
 Given a plant in canonical form \eqref{eq:bcf}--\eqref{eq:ccfMIMO} and let the exosystem be described as in \eqref{eq:exoMIMO}. Then, provided that $\M_k$ (defined as in \eqref{eq:Mk}) is regular, the solution of the regulator equations is determined by \eqref{eq:PiGamma_sol} and \eqref{eq:pi1k_sol}.
\end{theorem}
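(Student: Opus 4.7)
The plan is to build up the full solution block by block, reusing the SISO derivation from Section~\ref{sec:SISO}. Most of the ingredients already appear in the paragraphs preceding the theorem, so the proof is essentially the task of stringing them together and verifying that the claimed formulas are consistent with both regulator equations.

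First I would exploit the block-diagonal structure of $\A$, $\B$ in \eqref{eq:ccfMIMO} together with that of $\S$, $\Q$ in \eqref{eq:exoMIMO}. Partitioning $\mPi$ and $\mGamma$ as in \eqref{eq:PiGamma}, the first regulator equation \eqref{eq:reg1} decouples cleanly into the $m^2$ block equations $\mPi_{ij}\S_j = \A_{ii}\mPi_{ij} + \B_{ii}\mGamma_{ij}$. Each of these has precisely the structure of the SISO equation \eqref{eq:regSISO1} with integrator-chain data (the $\A_{ii}$, $\B_{ii}$ from \eqref{eq:bcf}, whose last row is zero), so reading off the first $\kappa_i-1$ scalar rows yields $\mPi_{ij}$ row-wise as in \eqref{eq:PiGamma_sol}, while the remaining row gives $\mGamma_{ij} = \vpi_{ij}^1 \S_j^{\kappa_i}$. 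After this step the only unknowns are the first rows $\vpi_{ij}^1$.

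Next I would substitute these expressions into the second regulator equation \eqref{eq:reg2}. Writing $\C_{ij} = \begin{bmatrix} c_{ij}^0 & \cdots & c_{ij}^{\kappa_j-1} \end{bmatrix}$ and using the row structure of $\mPi_{lk}$, the entry-wise expansion of $\sum_{l=1}^m \C_{il}\mPi_{lk}$ produces \eqref{eq:kil} with the polynomial blocks $\M_{kil}$. Grouping the resulting scalar conditions by the column-block index $k$ collapses them into a single linear system with coefficient matrix $\M_k$, whose unique solution under the regularity hypothesis is exactly \eqref{eq:pi1k_sol}. Reassembling the rows $\vpi_{ij}^1$ via \eqref{eq:PiGamma_sol} then gives $\mPi$ and $\mGamma$ in closed form.

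The main obstacle, if any, is the combinatorial bookkeeping in the substitution step: one must verify that the index pattern of the sums $\sum_{\nu=0}^{\kappa_l-1} c_{il}^\nu \S_k^\nu$ indeed produces the $\M_k$ assembled from the blocks $\M_{kil}$ as in \eqref{eq:Mk}, and that the right-hand side of \eqref{eq:kil} correctly carries $\Q_k$ on the diagonal block and zero off-diagonal. Regularity of $\M_k$ is not addressed here but is deferred to Proposition~\ref{prop:Mk}. Beyond this, I do not anticipate any genuinely new difficulty, since every computational step has already been carried out in the preceding discussion; the theorem itself is a clean packaging of those observations.
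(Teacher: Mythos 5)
Your proposal is correct and follows essentially the same route as the paper: the paper's own ``proof'' is precisely the derivation in the paragraphs preceding the theorem statement (decoupling \eqref{eq:reg1} into the $m^2$ block equations via the block-diagonal structure, reusing the SISO integrator-chain solution to obtain \eqref{eq:PiGamma_sol}, substituting into \eqref{eq:reg2} to arrive at \eqref{eq:kil}, and inverting $\M_k$ to get \eqref{eq:pi1k_sol}), with the theorem explicitly introduced as a summary of those results. Your observations that the last row of $\A_{ii}$ being zero reduces $\N$ to $\S_j^{\kappa_i}$, and that the regularity of $\M_k$ is handled separately in Proposition~\ref{prop:Mk}, match the paper exactly.
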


\begin{remark}
 Taking \eqref{eq:A'B'} and $(\widetilde{\A}, \widetilde{\B}, \widetilde{\C})=(\widetilde{\T}\A\widetilde{\T}^{-1}, \widetilde{\T}\B, \C\widetilde{\T}^{-1})$ into account, the solution of the regulator equations for the original system \eqref{eq:plant} is given by
 \begin{equation*}
  \mPi_\text{orig} = \widetilde{\T}^{-1} \mPi,  \qquad  \mGamma_\text{orig} = \widetilde{\F}\mGamma - \widetilde{\K}\mPi .
 \end{equation*}
\end{remark}

Theorem~\ref{thm:reg_eq_sol} provides an analytic solution of the regulator equations for MIMO systems in controllable canonical form. However, the canonical form \eqref{eq:ccfMIMO} is not well suited to compare the ORT and FBT methods.
Therefore, given the system output is flat, an alternative solution will be presented which is better suited to compare both methods. 
Let us consider the transformation matrix 
\begin{equation} \label{eq:T}
 \T = \begin{bmatrix} \C_1 \\ \vdots \\ \C_1 \A^{\delta_1-1} \\ \vdots \\ \C_m \\ \vdots \\ \C_m \A^{\delta_m-1} \end{bmatrix} \in \RZ^{\delta \times n} ,
\end{equation}
where $\C_k$ is the $k$-th row vector of $\C$. Note that $\rank{\T}=\delta$ (cf. Appendix~\ref{app:T}).
Furthermore, assume that $\y$ is a flat output such that $\delta=n$ and $\T$ is a square matrix.
Applying the transformation \eqref{eq:T}, we get $\A' = \T\A\T^{-1}$ as in \eqref{eq:bcf} and
\begin{equation}
 \C' = \C\T^{-1} = \begin{bmatrix} \C'_{11} & \cdots & \zero \\ \vdots & \ddots & \vdots \\ \zero & \cdots & \C'_{mm} \end{bmatrix}, \quad \text{with} \quad \C'_{kk} = \begin{bmatrix} 1 & 0 & \cdots & 0 \end{bmatrix}.
\end{equation}
Due to the definition of the relative degrees, the input matrix changes to
\begin{equation} \label{eq:B'}
 \B' = \T\B = \begin{bmatrix} \zero & \cdots & (\d_1^\star)^\tp & \cdots & \zero & \cdots & (\d_m^\star)^\tp \end{bmatrix}^\tp, 
\end{equation}
with $\d_k^\star = \C_k \A^{\delta_k-1} \B$. It can easily be checked that $\A'$ has the same integrator chain form as before. 
The transformed system $(\A',\B',\C')$ is suitable to prove the following claim.

\begin{theorem} \label{thm:ORT_FBT_MIMO}
 ORT and FBT approach lead to the same control law for systems with $\delta=n$, i.e. Theroem~\ref{thm:ORT_FBT_SISO} also holds for MIMO systems.
\end{theorem}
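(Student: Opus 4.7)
The strategy is to reduce the MIMO statement to $m$ parallel SISO statements by working in the flat-output coordinates provided by $\T$ in \eqref{eq:T}, and then invoking Theorem~\ref{thm:ORT_FBT_SISO} chain-by-chain.

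First, since $\y$ is flat and $\delta=n$, the matrix $\T$ is square and invertible. The state transformation $\x' = \T\x$ combined with the input substitution $\u = -\widetilde{\K}\x + \widetilde{\F}\widetilde{\u}$ of \eqref{eq:A'B'}, with $\widetilde{\F}$ inverting the decoupling matrix built from the row vectors $\d_k^\star$ of \eqref{eq:B'} and $\widetilde{\K}$ cancelling the last row of each Byrnes--Isidori block, brings the plant into the pure integrator-chain form \eqref{eq:bcf}--\eqref{eq:bnf_ii} with a block-diagonal $\C'$ satisfying $\C'_{ii} = \begin{bmatrix} 1 & 0 & \cdots & 0 \end{bmatrix}$. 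The state of the $k$-th chain is then $\begin{bmatrix} y_k & \dot y_k & \cdots & y_k^{(\delta_k-1)} \end{bmatrix}^\tp$, so the first $n$ derivatives of $\y$ are linearly mapped to the full state.

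Second, apply Theorem~\ref{thm:reg_eq_sol} to this transformed plant. The entries $c_{ij}^\nu$ of $\C'$ reduce to $c_{ii}^0 = 1$ with all others zero, so the building blocks $\M_{kil}$ of \eqref{eq:kil} collapse to $\M_{kii} = \I_{r_k}$ and $\M_{kil} = \zero$ for $i \ne l$. Hence $\M_k = \I_{m r_k}$ and \eqref{eq:pi1k_sol} immediately yields $\vpi^1_{kk} = \Q_k$ and $\vpi^1_{ik} = \zero$ for $i \ne k$. Substituting into \eqref{eq:PiGamma_sol} produces block-diagonal $\mPi$ and $\mGamma$ with diagonal blocks
\begin{equation*}
 \mPi_{kk} = \begin{bmatrix} \Q_k \\ \Q_k \S_k \\ \vdots \\ \Q_k \S_k^{\delta_k-1} \end{bmatrix}, \qquad \mGamma_{kk} = \Q_k \S_k^{\delta_k}.
\end{equation*}

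Third, in analogy with \eqref{eq:PoGp}, $\mPi \o$ returns the stack of reference derivatives $y_k^{d,(j)}$ for $j = 0,\ldots,\delta_k-1$ and $\mGamma \o$ returns $\begin{bmatrix} y_1^{d,(\delta_1)} & \cdots & y_m^{d,(\delta_m)} \end{bmatrix}^\tp$. Consequently the ORT control law $\widetilde{\u} = -\K \x' + (\K\mPi + \mGamma)\o$ decouples into $m$ independent SISO laws of the form \eqref{eq:uORT_SISO}, one per integrator chain. Theorem~\ref{thm:ORT_FBT_SISO} applied chain-by-chain identifies each with the corresponding FBT law \eqref{eq:uFBT} under the substitution $p_{k,i-1} = k_{k,i}$ (the correction $a_{k,i}$ appearing in the SISO proof is zero here since the integrator form has already cleared it). Inverting the transformations $\T$ and $\widetilde{\F}$ then gives equality of the two control laws in the original coordinates.

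The main obstacle is the bookkeeping of the two transformations: one must verify that the flat-output definition of $\T$ really produces a block-diagonal $\C'$ of the claimed shape, and that the input prefilter $\widetilde{\F}$ appearing on the ORT side is consistent with the one implicit in the FBT pre-filter $\F$ of \eqref{eq:uFBT}. Once the reduction to pure integrator chains is in place, the collapse $\M_k = \I_{m r_k}$ trivialises the ORT computation and the result follows from an $m$-fold application of the SISO theorem.
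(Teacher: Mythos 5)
Your proposal is correct and follows essentially the same route as the paper: transform to flat-output coordinates, scale the input by the inverse of the decoupling matrix $\D^\star$ to obtain $m$ independent integrator chains of lengths $\delta_k$, and then apply the SISO result chain by chain. The only cosmetic difference is that you obtain the feedforward terms by explicitly evaluating the MIMO regulator-equation solution of Theorem~\ref{thm:reg_eq_sol} (noting $\M_k=\I_{mr_k}$ for the integrator-chain form), whereas the paper simply cites the SISO formula \eqref{eq:uORT_SISO} for each decoupled subsystem with transfer function $1/s^{\delta_k}$; both yield the same $\mPi\o$ and $\mGamma\o$ and hence the same identification with the FBT law \eqref{eq:uFBT_MIMO}.
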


\begin{proof}
 First note that the output of $(\A',\B',\C')$ is a flat output (since $\delta=n$). Thus, as described in Section~\ref{sec:FBT}, for the FBT controller we consider the error differential equations
 \begin{subequations} \label{eq:ey}
 \begin{align} 
  (y_1^{(\delta_1)} - y_1^{d \, (\delta_1)}) + \sum_{k=0}^{\delta_1-1} p_{1k} (y_1^{(k)} - y_1^{d \, (k)}) &= 0 \\ 
   & \hspace{2mm} \vdots \notag \\ 
  (y_m^{(\delta_m)} - y_m^{d \, (\delta_m)}) + \sum_{k=0}^{\delta_m-1} p_{mk} (y_m^{(k)} - y_m^{d \, (k)}) &= 0 ,
 \end{align} 
 \end{subequations}
 leading to
 \begin{equation} \label{eq:ey_compact}
   \D^\star \u - \y^d_\delta + \P(\x-\y^d_{\nabla}) = \zero ,
 \end{equation} 
 with
 \begin{align*}
  \D^\star &= \begin{bmatrix} (\d_1^\star)^\tp & \cdots & (\d_m^\star)^\tp \end{bmatrix}^\tp,  \quad  \y^d_\delta = \begin{bmatrix} y_1^{d \, (\delta_1)} & \cdots & y_m^{d \, (\delta_m)} \end{bmatrix}^\tp \\
  \y^d_\nabla &= \begin{bmatrix} y^d_1 & \dot{y}^d_1 & \cdots & y_1^{d \, (\delta_1-1)} & \cdots & y^d_m & \dot{y}^d_m & \cdots & y_m^{d \, (\delta_m-1)} \end{bmatrix}^\tp , \\ 
  \P &= \begin{bmatrix} p_{10} & \cdots & p_{1 \delta_1-1} & \cdots & 0 & \cdots & 0 \\ \vdots & & \vdots & \ddots & \vdots & & \vdots \\ 0 & \cdots & 0 & \cdots & p_{m0} & \cdots & p_{m \delta_m-1} \end{bmatrix} .
 \end{align*}
 Therefore, the control law
 \begin{equation} \label{eq:uFBT_MIMO}
  \u = - (\D^\star)^{-1} \left( \P(\x-\y^d_\nabla) - \y^d_\delta \right) 
 \end{equation}
 achieves the desired tracking behavior. Note that the inverse $(\D^\star)^{-1}$ exists, which follows from \eqref{eq:B'}. Since $\T$ is a regular matrix and $\rank{\B}=m$, it is clear that $\rank{\B'}=m$ and, therefore, $\rank{\D^\star}=m$. 

 Next, we consider the ORT approach. Apparently, for $\u = (\D^\star)^{-1} \u'$ we get completely decoupled SISO systems where the $k$-th SISO system is zero free with dynamical order $\delta_k$ (the $k$-th transfer function reads $y_k(s)/u'_k(s) = 1/s^{\delta_k}$). Thus, we can use the results of Section~\ref{sec:SISO} and the ORT tracking controller for the $k$-th SISO system is given by (cf. \eqref{eq:uORT_SISO})
 \begin{equation}
  u'_k = -\begin{bmatrix} p_{k0} & \cdots & p_{k \delta_k-1} \end{bmatrix} \x_k + \begin{bmatrix} p_{k0} & \cdots & p_{k \delta_k-1} & 1 \end{bmatrix} \y^d_{\Delta,k}
 \end{equation} 
 or after some rearrangements 
 \begin{equation}
  u'_k = - \left( \begin{bmatrix} p_{k0} & \cdots & p_{k \delta_k-1} \end{bmatrix} (\x_k - \y^d_{\nabla,k}) - y_k^{d \, (\delta_k)} \right) ,
 \end{equation}
 with  
 \begin{align*}
  \y^d_{\Delta,k} &= \begin{bmatrix} (\y^d_{\nabla,k})^\tp & y_k^{d \, (\delta_k)} \end{bmatrix}^\tp , \\
  \y^d_{\nabla,k} &= \begin{bmatrix} y^d_k & \dot{y}^d_k & \cdots & y_k^{d \, (\delta_k-1)} \end{bmatrix}^\tp , \\
  \x_k &= \begin{bmatrix} x_{\sum_{j=1}^{k-1} \delta_j + 1} & \cdots & x_{\sum_{j=1}^{k} \delta_j}\end{bmatrix}^\tp .
 \end{align*}
 Stacking the vectors together for all $k=1,\ldots,m$, i.e. 
 \begin{align*}
  \y^d_{\nabla} &= \begin{bmatrix} (\y^d_{\nabla,1})^\tp & \cdots & (\y^d_{\nabla,m})^\tp \end{bmatrix}^\tp,  \\
  \x &= \begin{bmatrix} \x_1^\tp & \cdots & \x_m^\tp \end{bmatrix}^\tp ,  
 \end{align*}
 $\u'$ is given by
 \begin{equation*}
  \u' = - \left( \P(\x-\y^d_\nabla) - \y^d_\delta \right) ,
 \end{equation*}
 and 
 \begin{equation} \label{eq:uORT_MIMO}
  \u = (\D^\star)^{-1} \u' = - (\D^\star)^{-1} \left( \P(\x-\y^d_\nabla) - \y^d_\delta \right) 
 \end{equation}
 which is identical to \eqref{eq:uFBT_MIMO}. 
\end{proof}

From the above prove, it is clear that Corollary~\ref{col:exo} is also true in the MIMO case. Furthermore, similar to Corollary~\ref{col:cancel_pz}, we have the following result for MIMO systems with $\delta<n$.

\begin{corollary}
 Given that $\D^\star$ is invertible, the invariant zeros can be compensated by closed-loop poles and an exosystem formulation can be omitted. 
\end{corollary}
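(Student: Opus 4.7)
The plan is to mimic the SISO construction of Corollary~\ref{col:cancel_pz} in the MIMO setting, using invertibility of $\D^\star$ to reduce everything to a collection of decoupled zero-free integrator chains. First, I would apply the static input transformation $\u = (\D^\star)^{-1}\u'$, which, by the same argument as in the proof of Theorem~\ref{thm:ORT_FBT_MIMO}, makes the input-output map from $\u'$ to $\y$ block-diagonal: the $k$-th channel is a pure integrator chain of length $\delta_k$ driven by $u'_k$, while the remaining $n-\delta$ states form the invariant zero dynamics, driven by the chain states but decoupled from $\y$. This is essentially Byrnes-Isidori normal form obtained from the analytic solution already derived in Theorem~\ref{thm:reg_eq_sol}.

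Next, I would exploit the freedom in choosing $\K$. In the original ORT control law $\u = -\K\x + (\K\mPi+\mGamma)\o$, the only requirement on $\K$ is that $\A-\B\K$ be stable. The plan is to pick $\K$ so that the $n-\delta$ closed-loop eigenvalues associated with the zero-dynamics subsystem are placed exactly at the invariant zeros of the plant. Structurally, this renders the zero-dynamics modes unobservable from $\y$ in the closed-loop system $(\A-\B\K,\B,\C)$, exactly as the scalar factor $\overline{k}_1 \I_r + \cdots + \overline{k}_{\tau+1}\S^\tau$ in \eqref{eq:KS} cancels $\M^{-1}$ in the SISO case. The remaining $\delta$ closed-loop eigenvalues are then free and can be distributed among the integrator chains to specify the desired error dynamics \eqref{eq:ey}.

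After this pole-zero cancellation, the effective closed-loop map from the reference to $\y$ is a block-diagonal collection of zero-free SISO systems of order $\delta_k$ each, so the reduced system satisfies the hypothesis $\delta_{\text{eff}}=n_{\text{eff}}$ of Theorem~\ref{thm:ORT_FBT_MIMO}. Applying that theorem channel by channel yields an ORT controller of the form \eqref{eq:uORT_MIMO} in which the feed-forward term depends only on $y^d_k,\dot y^d_k,\ldots,y^{d\,(\delta_k)}_k$ for $k=1,\ldots,m$; the matrices $\S$ and $\Q$ drop out entirely, so no exosystem description of $\y^d$ is needed. As in the SISO case, the construction is only meaningful for minimum-phase plants, since otherwise the pole-zero cancellation destabilises the closed loop.

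The main obstacle I anticipate is making the MIMO pole-zero cancellation rigorous. In the SISO case, the cancellation is visible algebraically via the factorisation~\eqref{eq:KS} of the matrix polynomial $\K_S$, but in the MIMO case the invariant zeros are intertwined with all $m$ output channels through the Kronecker indices $\kappa_i$ and the cross-coupling blocks $\C_{ij}$. A clean way to handle this is to invoke the geometric characterisation of invariant zeros as eigenvalues of $\A|\mathcal{V}^\star$, where $\mathcal{V}^\star$ is the maximal $(\A,\B)$-invariant subspace in $\ker \C$; the friend $\K$ of $\mathcal{V}^\star$ then provides precisely the feedback that assigns the invariant zeros as closed-loop poles on $\mathcal{V}^\star$, while the complementary dynamics remain assignable. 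With this structural identification the reduction to Theorem~\ref{thm:ORT_FBT_MIMO} is routine.
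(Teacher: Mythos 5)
Your proposal is correct and follows the same overall strategy as the paper --- normalize the input with $(\D^\star)^{-1}$, choose the state feedback so that the $n-\delta$ zero-dynamics modes become unobservable from $\y$, and then apply Theorem~\ref{thm:ORT_FBT_MIMO} to the remaining zero-free system of order $\delta$ --- but you justify the central cancellation step differently. The paper is entirely constructive and elementary: it completes the rank-$\delta$ matrix $\T$ from \eqref{eq:T} to a regular $\T_n$, observes that in these coordinates the coupling from the internal states into the integrator chains sits only in the last rows $\A_\delta^\star$ of the block $\A_\delta$, and writes down the explicit gain $\K_n = (\D^\star)^{-1}\begin{bmatrix}\zero & \A_\delta^\star\end{bmatrix}$ that zeroes this block, so that $\A_n-\B_n\K_n$ becomes block lower triangular with the unobservable internal dynamics $\A'_\eta$ carrying the invariant zeros. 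You instead invoke the geometric characterization of the invariant zeros as the fixed spectrum of $(\A-\B\K)|\mathcal{V}^\star$ for a friend $\K$ of the maximal controlled-invariant subspace in $\ker\C$; this is a valid existence argument (for square systems with invertible $\D^\star$ one indeed has $\dim{\mathcal{V}^\star}=n-\delta$ with spectrum exactly the invariant zeros), but it imports machinery the paper deliberately avoids and yields no explicit formula for the gain. Two small inaccuracies are worth fixing. First, the input transformation $\u=(\D^\star)^{-1}\u'$ alone does \emph{not} produce pure integrator chains, since $y_k^{(\delta_k)}=\C_k\A^{\delta_k}\x+u'_k$ still couples to the internal states; the decoupling only appears after the state feedback, which your second step supplies. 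Second, the zero-dynamics eigenvalues are not ``placed'' at the invariant zeros by a choice of $\K$ --- for any friend of $\mathcal{V}^\star$ they are forced there, which is precisely why the construction fails to stabilize non-minimum phase plants.
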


\begin{proof}
 Let us define the matrix
 \begin{equation} \label{eq:T2}
  \T_n = \begin{bmatrix} \T \\ \t_{\delta+1} \\ \vdots \\ \t_n \end{bmatrix} \in \RZ^{n \times n} ,
 \end{equation}
 where $\T$ is determined as in \eqref{eq:T} and $\t_{\delta+1}, \ldots, \t_n$ are chosen such that $\T_n$ is regular. 
 For the transformed system $\A_n = \T_n\A\T_n^{-1}$, $\B_n = \T_n\B$, $\C_n = \C\T_n^{-1}$, we get 
 \begin{equation*}
  \A_n = \begin{bmatrix} \A' & \A_\delta \\ \A_\phi & \A_\eta \end{bmatrix} , \quad \B_n = \begin{bmatrix} \B' \\ \B_\eta \end{bmatrix}, \quad \C_n = \begin{bmatrix} \C' & \zero \end{bmatrix} ,
 \end{equation*}
 with $\A' \in \RZ^{\delta \times \delta}, \B' \in \RZ^{\delta \times m}, \C' \in \RZ^{m \times \delta}$ as before, and
 \begin{equation}
  \A_\delta = \begin{bmatrix} \A_{\delta,1} \\ \vdots \\ \A_{\delta,m} \end{bmatrix} \in \RZ^{\delta \times (n-\delta)} , \qquad  \A_{\delta,k} = \begin{bmatrix} 0 & \cdots & 0 \\ \vdots & & \vdots \\ 0 & \cdots & 0 \\ \star & \cdots & \star \end{bmatrix} \in \RZ^{\delta_k \times (n-\delta)} .
 \end{equation}
 The matrices $\A_\phi \in \RZ^{(n-\delta) \times \delta}$, $\A_\eta \in \RZ^{(n-\delta) \times (n-\delta)}$ and $\B_\eta \in \RZ^{(n-\delta) \times m}$ do not have any specific structure. 
 Let us denote the last row of $\A_{\delta,k}$ as $\A_{\delta,k}^\star$ and $\A_{\delta}^\star = \begin{bmatrix} (\A_{\delta,1}^\star)^\tp & \cdots & (\A_{\delta,m}^\star)^\tp \end{bmatrix}^\tp$. Then, provided that $\D^\star$ is invertible, we can design a feedback controller with gain $\K_n = (\D^\star)^{-1} \begin{bmatrix} \zero & \A_\delta^\star \end{bmatrix}$ such that the closed-loop system matrix reads
 \begin{equation}
  \A_n - \B_n\K_n = \begin{bmatrix} \A' & \zero \\ \A_\phi & \A'_\eta \end{bmatrix}, \quad \text{ with } \quad \A'_\eta = \A_\eta-\B_\eta(\D^\star)^{-1}\A_\delta^\star .
 \end{equation}
 The dynamics $\A'_\eta$ are not observable and represent the internal dynamics of the system. The reduced system $(\A',\B',\C')$ -- with dynamical order $\delta$ -- has no zero dynamics and Theorem~\ref{thm:ORT_FBT_MIMO} is directly applicable for it. This means that, like in the SISO case, we can compensate the zero dynamics by an appropriate feedback and design the tracking controller without an exosystem. 
\end{proof}

Note that the eigenvalues of the internal dynamics $\A'_\eta$ are the invariant zeros of the plant. If there are invariant zeros in the closed right half of the complex plane, the controller $\K_n$ leads to an unstable closed-loop behavior.

 Additionally to the minimum phase characteristics of the plant, there is another restriction on the zero compensation approach in case of MIMO systems. For the compensation gain $\K_n$, it is required that $\D^\star$ is invertible. Thus, the compensation approach is only applicable for systems with regular $\D^\star$. 

\begin{remark}
 For $\delta = n$, $\D^\star$ is regular as shown in the proof of Theorem~\ref{thm:ORT_FBT_MIMO}, but for $\delta<n$ this must not be true.
\end{remark}

It should be noted that if the FBT approach is directly applied to systems with $\delta < n$, this results into the zero compensation method described above. Then, the error dynamics \eqref{eq:ey}--\eqref{eq:ey_compact} or the tracking controller \eqref{eq:uFBT_MIMO} leads to an input-output decoupled system for which the internal zero dynamics are made unobservable by pole-zero compensations. In this case, the regularity of $\D^\star$ is a necessary (and sufficient) condition for the zero compensation or input-output decoupling (cf. e.g. \citep{FalbWolovich1967}), and leads to unstable closed-loop dynamics for non-minimum phase systems.
However, the ORT approach enables tracking even if the system is non-minimum phase or if $\D^\star$ is singular. The only requirement is that the plant is stabilizable and that no invariant zero coincides with an eigenvalue of the exosystem. It is not necessary to compensate the zero dynamics, but then the reference trajectory must be provided by an exosystem.

\section{Conclusion} \label{sec:conclusion}

Two different approaches for the trajectory tracking problem of multivariable linear systems have been considered: the \emph{output regulation theory} (ORT) and the \emph{flatness based tracking} (FBT). It has been shown that both methods lead to identical tracking controllers if the output of the plant is flat. Moreover, an analytic solution of the \emph{regulator equations} has been derived which sheds new light into the tracking controller of the ORT. Based on this solution, the connection to the FBT approach has been discussed and it has been shown that an \emph{exosystem} formulation for the reference trajectory can be omitted if the controller parameters are chosen appropriately. In case of non-minimum phase systems, however, this special choice leads to an unstable closed-loop behavior. 

An interesting point for future research would be to extend the results of this paper to nonlinear systems. One of the main difficulties is to find an analytic solution of the regulator equations which in the nonlinear case are partial differential equations. Presumably, a solution in closed form may be impossible to find.

%
%
%
%
%


\appendices

\section{Proof of Proposition~\ref{prop:Mk}} \label{app:prop:Mk}

Let us define the matrices $\mXi_k \in \RZ^{(n-m) r_k \times n r_k}$ and $\mPsi_k \in \RZ^{n r_k \times (n-m) r_k}$:
\allowdisplaybreaks
\begin{align*} 
 \mXi_k &= \begin{bmatrix} \begin{matrix} \zero & -\I_{r_k} & \cdots & \zero \\ \vdots & \vdots & \ddots & \vdots \\ \zero & \zero & \cdots & -\I_{r_k} \end{matrix}\rlap {\hspace*{.8mm} $\left.\vbox to 26pt{} \right\} (\kappa_1-1) \cdot r_k \, $ rows} & & \\ & \ddots & \\ & & \rlap{\hspace*{-39mm} $(\kappa_m-1) \cdot r_k \, \text{ rows }\left\{ \vbox to 26pt{} \right.$} \begin{matrix} \zero & -\I_{r_k} & \cdots & \zero \\ \vdots & \vdots & \ddots & \vdots \\ \zero & \zero & \cdots & -\I_{r_k} \end{matrix} \end{bmatrix} \\ 
 \mPsi_k &= \begin{bmatrix} \begin{matrix} \S_k & \zero & \cdots & \zero \\ -\I_{r_k} & \S_k & \cdots & \zero \\ \zero & -\I_{r_k} & \ddots & \vdots \\ \vdots & \vdots & \ddots & \S_k \\ \zero & \zero & \cdots & -\I_{r_k} \end{matrix}\rlap {\hspace*{2mm}$\left.\vbox to 45pt{} \right\} \kappa_1 \cdot r_k \, $ rows} &  & \\ & \ddots & \\ &  & \rlap{\hspace*{-28mm}$\kappa_m \cdot r_k \, \text{ rows }\left\{ \vbox to 45pt{} \right.$}\begin{matrix} \S_k & \zero & \cdots & \zero \\ -\I_{r_k} & \S_k & \cdots & \zero \\ \zero & -\I_{r_k} & \ddots & \vdots \\ \vdots & \vdots & \ddots & \S_k \\ \zero & \zero & \cdots & -\I_{r_k} \end{matrix} \end{bmatrix} 
\end{align*}
where the empty entries are all zero. Then, the following equation holds
\begin{equation} \label{eq:aux}
  \underbrace{\begin{bmatrix} \mXi_k \\ \overline{\S}_k \end{bmatrix}}_{\mTheta_k} \underbrace{\begin{bmatrix} \mPsi_k & \overline{\C}^\tp_k \end{bmatrix}}_{\Z_k} = \begin{bmatrix} \mPhi_k & \mXi_k \overline{C}^\tp_k \\ \zero & \M_k \end{bmatrix}
\end{equation}
with
\begin{equation*}
  \mPhi_k = \mXi_k \mPsi_k = \begin{bmatrix} \I_{r_k} & -\S_k & \cdots & \zero \\ \zero & \I_{r_k} & \ddots & \vdots \\ \vdots & \vdots & \ddots & -\S_k \\ \zero & \cdots & \cdots & \I_{r_k} \end{bmatrix} \in \RZ^{(n-m) r_k \times (n-m) r_k} .
\end{equation*}
Note that $\mTheta_k$ on the left-hand side of \eqref{eq:aux} is regular since there is exactly one identity matrix in every column and row block. Therefore, the regularity of the matrix on the right-hand side of \eqref{eq:aux} depends on the regularity of $\Z_k$. Obviously, $\mPhi_k$ is a regular matrix, thus, $\M_k$ is regular/singular if $\Z_k$ is regular/singular. However, $\Z_k$ is regular if and only if no invariant system zero coincides with an exosystem eigenvalue. To see this, consider the \emph{Rosenbrock's} system matrix
\begin{equation} \label{eq:RccfMIMO}
  \R(\lambda) = \left[ \begin{array}{ccccccccc|ccc} \lambda & -1 & \cdots & 0 & & & & & & 0 & & \\ 0 & \lambda & \ddots & \vdots & & & & & & 0 & & \\ \vdots & \vdots & \ddots & -1 &  & & & & & 0 & & \\ 0 & 0 & \cdots & \lambda & & & & & & -1 & & \\  &  &  &  & \ddots &  &  &  &  &  & \ddots &  \\ & & & & & \lambda & -1 & \cdots & 0 & & & 0 \\ & & & & & 0 & \lambda & \ddots & \vdots & & & 0 \\ & & & & & \vdots & \vdots & \ddots & -1 & & & \vdots \\ & & & & & 0 & 0 & \cdots & \lambda & & & -1 \\ \hline &&&& \C &&&& &&\zero& \end{array} \right] .
\end{equation}
Expanding from the last $m$ columns, it is clear that $\rank{\R(\lambda)} = \rank{\overline{\R}(\lambda)}$, where $\overline{\R}(\lambda)$ is the matrix which results from deleting the last $m$ columns and the rows $\sum_{i=1}^j \kappa_i$ for all $j=1,\ldots,m$: 
\begin{equation} \label{eq:R-ccfMIMO}
  \overline{\R}(\lambda) = \left[ \begin{array}{ccccccccccc} \lambda & -1 & 0 & \cdots & 0 & & & & & & \\ 0 & \lambda & -1 & \cdots & 0 & & & & & & \\ \vdots & \vdots & \ddots & \ddots & \vdots & & & & & & \\ 0 & 0 & \cdots & \lambda & -1 & & & & & & \\ & & & & & \ddots & & & & & \\ & & & & & & \lambda & -1 & 0 & \cdots & 0 \\ & & & & & & 0 & \lambda & -1 & \cdots & 0 \\ & & & & & & \vdots & \vdots & \ddots & \ddots & \vdots \\ & & & & & & 0 & 0 & \cdots & \lambda & -1 \\ \hline &&&&& \C &&&&& \end{array} \right] .
\end{equation}
Note that $\overline{\R}(\lambda)$ is singular for every $\lambda$ which is an invariant zero of the system or, equivalently, the invariant zeros are the roots of the polynomial $\overline{P}(\lambda) = \det{\overline{\R}(\lambda)}$. Comparing \eqref{eq:R-ccfMIMO} with $\Z_k$, it can be seen that $\Z_k^\tp$ is equal to $(\overline{\R}(\lambda) \otimes \I_{r_k})$ if $\lambda$ is replaced by $\S_k$, i.e. $ \Z_k^\tp = (\overline{\R}(\lambda) \otimes \I_{r_k})|_{\lambda=\S_k}$. Using determinant and \emph{Kronecker} product rules, we know that the following identity holds: $\det{\overline{\R}(\lambda) \otimes \I_{r_k}} = \det{\overline{\R}(\lambda)} \det{\I_{r_k}} = \det{\overline{\R}(\lambda)} = \overline{P}(\lambda)$. Thus, by the \emph{Cayley-Hamilton} theorem, the matrix $\overline{\P}(\S_k)=(\overline{P}(\lambda)\otimes \I_{r_k})|_{\lambda=\S_k}$ is regular if and only if no invariant zero of the plant coincides with an eigenvalue of $\S_k$. This implies that $\Z_k$ is regular and since regularity of $\M_
k$ is equivalent to regularity of $\Z_k$, Proposition~\ref{prop:Mk} is proved.

\section{Rank of transformation matrix \textbf{\textit{T}}} \label{app:T}

It is shown that the rank of the matrix $\T$ in \eqref{eq:T} is $\delta$. 
The proof is inspired by \citep{Lohmann1991d} and given as follows. Consider the equation 
\begin{equation} \label{eq:dof}
 \begin{bmatrix} \lambda_{ij}\I_n-\A & -\B \\ \C & \zero \end{bmatrix} \begin{bmatrix} \v_{ij} \\ \w_{ij} \end{bmatrix} = \begin{bmatrix} \zero \\ \e_{i} \end{bmatrix}, \quad i=1,\ldots,m, \quad j=1,\ldots,\delta_i ,
\end{equation}
where $\e_i$ is the unit vector with the $i$-th element equal to $1$ and all other elements $0$. The matrix on the left-hand side of equation \eqref{eq:dof} is the \emph{Rosenbrock's} system matrix. Thus, choosing $\lambda_{ij}$ different from the invariant zeros, \eqref{eq:dof} can be solved for $\v_{ij}$ and $\w_{ij}$. In this case, the second row block of \eqref{eq:dof} yields 
\begin{equation} \label{eq:dof1} 
 \C_k \v_{ij} = \begin{cases} 1 ,& k=i, \\ 0 ,& k \ne i . \end{cases}
\end{equation}
Multiplying the first row block of \eqref{eq:dof} with $\C_k, \ldots, \C_k\A^{\delta_k-2}$ leads to
\begin{subequations} \label{eq:dof2} 
\begin{align} 
 \C_k \A \v_{ij} + \underbrace{\C_k \B}_{\zero} \w_{ij} &= \begin{cases} \lambda_{ij} ,& k=i, \\ 0 ,& k \ne i , \end{cases}  \\
 & \hspace{2mm} \vdots \notag \\
 \C_k \A^{\delta_k-1} \v_{ij} + \underbrace{\C_k \A^{\delta_k-2} \B}_{\zero} \w_{ij} &= \begin{cases} \lambda_{ij}^{\delta_i-1} ,& k=i, \\ 0 ,& k \ne i . \end{cases}
\end{align}
\end{subequations}
Eqs. \eqref{eq:dof1} and \eqref{eq:dof2} are written in a more compact way as 
\begin{equation*}
 \begin{bmatrix} \C_k \\ \vdots \\ \C_k \A^{\delta_k-1} \end{bmatrix} \v_{ij} = \begin{cases} \mLambda_{ij} ,& k=i, \\ \zero ,& k \ne i , \end{cases} \qquad \text{ with } \quad \mLambda_{ij} = \begin{bmatrix} 1 \\ \lambda_{ij} \\ \vdots \\ \lambda_{ij}^{\delta_i-1} \end{bmatrix} .
\end{equation*}
This procedure is repeated for all $j=1,\ldots,\delta_i$, leading to
\begin{equation} \label{eq:TkVi}
 \begin{bmatrix} \C_k \\ \vdots \\ \C_k \A^{\delta_k-1} \end{bmatrix} \underbrace{\begin{bmatrix} \v_{i1} & \cdots & \v_{i\delta_i} \end{bmatrix}}_{\V_i} = \begin{cases} \mLambda_{i} ,& k=i, \\ \zero ,& k \ne i , \end{cases} 
\end{equation}
with $\mLambda_{i} = \begin{bmatrix} \mLambda_{i1} & \cdots & \mLambda_{i\delta_i} \end{bmatrix}$. Writing \eqref{eq:TkVi} for all $k,i \in \{1,\ldots,m\}$ in a matrix, it can easily be seen that 
\begin{equation} \label{eq:TVL}
 \T \underbrace{\begin{bmatrix} \V_1 & \cdots & \V_m \end{bmatrix}}_{\V \in \RZ^{n \times \delta}} = \begin{bmatrix} \mLambda_1 & \cdots & \zero \\ \vdots & \ddots & \vdots \\ \zero & \cdots & \mLambda_m \end{bmatrix} \in \RZ^{\delta \times \delta} ,
\end{equation}
 where $\T$ is given as in \eqref{eq:T}. Note that the matrices 
\begin{equation*} 
 \mLambda_{i} = \begin{bmatrix} 1 & \cdots & 1 \\ \lambda_{i1} & \cdots & \lambda_{i\delta_i} \\ \vdots &  & \vdots \\ \lambda_{i1}^{\delta_i-1} & \cdots & \lambda_{i\delta_i}^{\delta_i-1} \end{bmatrix} \in \RZ^{\delta_i \times \delta_i}
\end{equation*}
are \emph{Vandermonde} matrices which are regular if all the numbers $\lambda_{ij}$ are distinct. Thus, for distinct $\lambda_{ij}$, the matrix on the right-hand side of \eqref{eq:TVL} is regular. But this is only possible if the rank of both matrices $\T$ and $\V$ on the left-hand side of \eqref{eq:TVL} is $\delta$.

\end{document}